\newcommand{\subtitle}[1]{%
  \posttitle{%
    \par\end{center}
    \begin{center}\large#1\end{center}
    \vskip0.5em}%
}
\author{Tomasz Drab}
\title{Reduction Strategies in the Lambda Calculus\\ and Their Implementation through\\ Derivable Abstract Machines: Introduction\thanks{This is the introduction to my Ph.~D. dissertation supervised by Dariusz Biernacki and Małgorzata Biernacka, defended on 19th of June 2023 at the Faculty of Mathematics and Computer Science of the University of Wrocław.}}
\date           {23rd September 2022}
\definecolor{HlYellow}{rgb}{0.988, 0.906, 0.275}
\renewcommand{\emptyset}{\varnothing}
\newcommand{\nil}{[\,]}
\newcommand{\cons}[2]{{#1 :: #2}}
\newcommand \argfill{\,\cdot\,}
\newcommand \alt{\;\;|\;\;}
\newcommand{\update}[3]{{{#1}[{#2} \!:=\! {#3}]}}
\newcommand{\subst}[3]{\update{#3}{#1}{#2}}
\newcommand{\fv}{ {\mathit{FV}} }
\newcommand{\bv}{ {\mathit{BV}} }
\newcommand{\contr}[1]{ {\rightharpoonup}_{#1} }
\newcommand{\const}[1]{ {\lceil {#1} \rceil} }
\newcommand{\tlam}[2]{{\lambda{#1}. {#2}}}
\newcommand{\tapp}[2]{{{#1}\,{#2}}}
\newcommand{\hole}{\Box}
\newcommand{\plug}[2]{ {{#1}[{#2}]} }
\newcommand{\red}[2]{ {\stackrel{#1}{\to}_{#2}} }
\newcommand{\rred}[2]{ {\stackrel{#1}{\twoheadrightarrow}_{#2}} }
\newcommand{\nred}[2]{ {\stackrel{#1}{\not\to}_{#2}} }
\newcommand{\converts}[2]{ {\stackrel{#1}{=}_{#2}} }
\newcommand{\nv}{\text{naïve}}
\newcommand{\leftstra}{{\scriptstyle\swarrow}}
\newcommand{\rightstra}{{\scriptstyle\searrow}}
\newcommand{\cut}[1]{ \text{\hl{${\big\langle{#1}\big\rangle}$}} }
\newcommand \etrian{{\textcolor{blue}\triangledown}}
\newcommand \ctrian{{\textcolor{green!50!black}\vartriangle}}
\newcommand{\conf}[2]{\langle{#1}, {#2}\rangle}
\newcommand{\econf}[2]{\langle{#1}, {#2}\rangle_\etrian}
\newcommand{\cconf}[2]{\langle{#1}, {#2}\rangle_\ctrian}
\newcommand{\nconf}[2]{\langle{#1}, {#2}\rangle_{\ctrian'}}
\newcommand{\es}[2]{[{#1} \!\leftarrow\! {#2}]}
\ttfamily\color{white!30!black},
\newtheorem{lemma}{Lemma}
\newtheorem{theorem}{Theorem}
\newenvironment{dedication}
  {
   \thispagestyle{empty}
   \vspace*{\stretch{1}}
   \itshape             
   \raggedleft          
  }
  {\par 
   \vspace{\stretch{3}} 
   \clearpage           
  }
\begin{document}
\maketitle

\begin{abstract}
\noindent The lambda calculus since more than half a century is a model and foundation of functional programming languages. However, lambda expressions can be evaluated with different reduction strategies and thus, there is no fixed cost model nor one canonical implementation for all applications of the lambda calculus.

This article is an introduction to a dissertation is composed of four conference papers where: we present a~systematic survey of reduction strategies of the lambda calculus;
we take advantage of the functional correspondence as a tool for studying implementations of the lambda calculus by deriving an abstract machine for a precisely identified strong call-by-value reduction strategy;
we improve it to obtain an efficient abstract machine for strong call by value and provide a time complexity analysis for the new machine with the use of a~potential function;
and we present the first provably efficient abstract machine for strong call by need.

\begin{center}\rule[3pt]{300pt}{1pt}\end{center}

Rachunek lambda od ponad pół wieku stanowi model i fundament funkcyjnych języków programowania. Jednak lambda wyrażenia mogą być wartościowane przez różne strategie redukcji i stąd nie ma ustalonego modelu kosztów ani kanonicznej implementacji dla wszystkich zastosowań rachunku lambda.

Niniejszy artykuł jest wprowadzeniem do rozprawy składającej się z czterech prac konferencyjnych, w których: prezentujemy systematyczny przegląd strategii redukcji w rachunku lambda;
wykorzystujemy odpowiedniość funkcyjną jako narzędzie do studiowania implementacji rachunku lambda przez wyprowadzenie maszyny abstrakcyjnej dla precyzyjnie zidentyfikowanej strategii silnego wołania przez wartość;
ulepszamy ją, by otrzymać wydajną maszynę abstrakcyjną dla tej samej strategii, i~dostarczamy analizę złożoności czasowej dla nowej maszyny z zastosowaniem funkcji potencjału;
prezentujemy pierwszą dowodliwie wydajną maszynę abstrakcyjną dla silnego wołania przez potrzebę.
\end{abstract}

\newpage
\begin{dedication}To All Saints\end{dedication}

\setlength{\parskip}{0.8ex}
\section*{Acknowledgements}
\label{sec:ack}
\addcontentsline{toc}{section}{\nameref{sec:ack}}

First, I would like to thank three persons that have the most direct impact on the successful preparation of this work, to my mentors that in the way turned themselves into my co-authors:

To Dariusz Biernacki who plays the role of my primary supervisor,
where by ``to play'' I do not mean ``to pretend'' but ``to fulfil'' the duty authentically and with care.
During our talks, I was able to investigate my doubts and critical thoughts about the essence of research, but it was also an occasion to watch his inspiring ethos of scholar.
However random my questions were, thanks to his guidance it all lead to the completion of the dissertation.

To Małgorzata Biernacka, the auxiliary supervisor of this dissertation and the supervisor of my master thesis.
She is the person that made me realize the non-technical part of the research most.
For example, she showed me the importance of the narrative in research papers and supplied it for our results.
Many times she edited and corrected my messy writing, and I still learn from it how to do it better.
Both my supervisors found a topic of study for me and brought me to an edge of the research where I could try to simply employ my technical skills with an interest in the subject.
It seems to be another crucial element for the completion of my Ph.D. programme.

To Witold Charatonik who informally can be named my third supervisor.
We have discussed the most technical parts of our publications and thanks to him many lemmas gained their formal character, in particular proofs of the bypass transitions' correctness.
Many times, I could witness his true conversance and proficiency in formal reasoning.
At the same time, he is for me a role model of humility proper to scholars.

I also would like to note how lucky I was to have had great teachers continuously over the years and to thank just a few of them. 
To Sławomir Kosiński who blessed me with a vision of a future doctorate in mathematics when I was in primary school and to Alina Krzykawiak who at the time taught me to write my first programs in Logo.
To Stanisława Grzywna who oversaw my mathematical training in junior high school.
To Agnieszka Kazun, Beata Laszkiewicz, and Michał Śliwiński, my high school teachers, who taught me mathematics and informatics in such an inspirational way that I could get to know them as persons at the same time.
And to Yuriy Kryakin, the supervisor of my bachelor thesis, who spent many hours and much of his attention sharing with me his affection for mathematics, resulting in renewing my own.

I am grateful to the Foundation of Mathematicians of Wrocław for organizing mathematical competitions for children that motivated me to engage more with mathematics.
Especially the contest called KoMa affected my approach to research and learning.
I~would like to draw attention also to the scientific environment organized at the Institute of Computer Science of the University of Wrocław.
The work of its employees and students enabled me to obtain higher education in a great atmosphere.
Most abstractly, I am grateful for an opportunity to work as just a teacher and researcher thanks to public funding.

My research perspective gained much from the contact with researchers representing the external research community.
I would like to thank Beniamino Accattoli for remote discussions on technical topics and Claudio Sacerdoti Coen for hosting a fruitful meeting of us three at the University of Bologna and presenting the charming city of Bologna to me. 

I thank my friends that help me in so many various ways that it would be too hard to enumerate it all here.

Finally, I want to thank my family for numerous signs of their support.
Especially, to my parents, Danuta and Piotr, for their, expressed repeatedly for almost three decades of my life,
unconditional care for my good
and belief that I am able to participate in great things.
And to my wife, Monika, for inspiration and her one-of-a-kind help during almost seven years of marriage.
\setlength{\parskip}{1.5ex}

\tableofcontents

\section{Social Introduction}
\subsection{Prologue}

Many have undertaken the effort of setting down an account of their research experience in the form of a dissertation. After having been investigating the topic from the ground up, it falls on me too to write an orderly account of the phenomena that I have encountered during the last four years of my work. I am willing to do it, so that everyone may follow the way we went and learn about these things easier.

However, I try to approach the task humbly \cite{DBLP:journals/cacm/Dijkstra72}, being aware of my finitude and ignorance. Because of them, my report will be incomplete and possibly flawed in places. Regardless, I discern that the best I can do is to, still uneducated and biassed as I am, and in limited time, honestly describe the things as they appear to me.

\subsection{Outline of the Dissertation}

The dissertation begins with the general introduction divided into three subchapters:
\begin{itemize}
\item \textit{Social Introduction}
\item \textit{Methodological Remarks}
\item \textit{Technical Introduction}
\end{itemize}
and a short \textit{Conclusion} section. The first two subchapters present freely my personal motivations and reflections upon research work while the third one is intended to introduce the reader smoothly to the technical content of the dissertation keeping both feet on the ground and discuss its contributions.

The introduction is followed by four technical chapters that are research papers published at various conferences:
\begin{itemize}
\item[{\cite{I}}] Małgorzata Biernacka, Witold Charatonik, Tomasz Drab. \textit{The Zoo of Lambda-Calculus Reduction Strategies, And Coq}. ITP 2022: 7:1-7:19
\item[{\cite{II}}] Małgorzata Biernacka, Dariusz Biernacki, Witold Charatonik, Tomasz Drab. \textit{An Abstract Machine for Strong Call by Value}. APLAS 2020: 147-166
\item[{\cite{III}}] Małgorzata Biernacka, Witold Charatonik, Tomasz Drab. \textit{A Derived Reasonable Abstract Machine for Strong Call by Value}. PPDP 2021: 6:1-6:14
\item[{\cite{IV}}] Małgorzata Biernacka, Witold Charatonik, Tomasz Drab. \textit{A Simple and Efficient Implementation of Strong Call by Need by an Abstract Machine}. ICFP 2022: 94:1-94:28
\end{itemize}

\subsection{On the Effectiveness of Language}

The root of my research endeavour is a reflection on the essence of language, mathematics, and science in general. In the following paragraphs, I will refer jointly to logic and mathematics, given that the lion's share of mathematics can be logicized. I will also to some extent conflate them with language because language, more and more formal, is our gateway to logic and mathematics.
Since this philosophical reflection is not the topic of the dissertation itself, the selection of literature referenced below is not exhaustive, but rather illustrative.

A common practice to introduce such context is to refer to Eugene Wigner’s article from 1960 titled “The Unreasonable Effectiveness of Mathematics in the Natural Sciences” \cite{Wigner60}. Wigner in the beginning mentions his “eerie feeling” about unexpected applications of abstract mathematics. Thereafter, he states that mathematical language is “the correct language” to formulate laws of nature.

Even earlier, in 1936, Albert Einstein expressed his awe over “the very fact that the totality of our sense experiences is such that by means of thinking [\ldots] it can be put in order” in words “the eternal mystery of the world is its comprehensibility” and calls it a miracle \cite{Einstein36}. Again, I find the linguistic aspect of the phenomenon in the abstract of Einstein’s article where he states “physics constitutes a logical system of thought.”

More recently, Edward Frenkel in the book „Love and Math: The Heart of Hidden Reality” shared his delight over the beauty and power of mathematics. Frenkel witnesses mathematics’ objectivity and constancy, that is the meaning of mathematical theorems is independent of time, space, gender, religion, or skin colour \cite{Frenkel13}. Moreover, applications of formal languages are still reported. For instance, scholars from various universities elaborated on examples of the “Unusual Effectiveness of Logic in Computer Science” not long ago \cite{DBLP:journals/bsl/HalpernHIKVV01}. Plus, from a different angle, we value the truth in our day-to-day lives. Perhaps needless to say, the tool to state the truth is language.

As an explanation of the phenomenon, Max Tegmark proposes and defends a hypothesis that the universe itself is a mathematical structure \cite{Tegmark14}.
This perspective may be attractive to some with a reductionist proclivity akin to mine.
After watching the film “The Matrix”, it challenges us less to imagine that we perceive and live in a computer simulation.

In any case, it is plausible to me that the universe is subjected to an exhaustive description in a formal language. It fascinates me that reality, or at least its parts, is simulable in computer or human memory, so that knowledge, in the form of the description, can be extrapolated to reliable predictions. I consider it a wonder that something as deep as meaning can be captured and reified into something as flat, material and mechanically processable as text. I also find it useful to articulate and verbalize our thoughts in order to objectively, possibly externally, analyze their validity because the rules of logic are already set and insusceptible to manipulation. It is how thinking is elevated to reasoning. In extreme cases, the analysis may take hundreds of years, as it was in the case of Fermat’s Last Theorem. All of this brings me to a contemplation of concepts such as language, meaning, word, truth, reason, dialogue etc. My research activity was a modest attempt to commune with the \emph{Phenomenon of the effectiveness of language} described in the preceding paragraphs.

Besides witnesses of the effectiveness of language, logic and mathematics, there are critics of such an enthusiastic view who give refreshing arguments for alternative explanations. For example, we shall consider if the presented perspective is not fragmentary due to an adaptation formed in the process of Darwinian evolution. Nevertheless, I think that stating my viewpoint, mimicking the scientific debate, and dialoguing are appropriate means to get rid of my delusions, including these about the Phenomenon.

Studies of formal languages resulted in conclusions of the limitative character with Gödel's incompleteness theorem, as conceivably the most widely known, at the forefront. Another incompleteness of language is that expressions of a given language are always subject to interpretation. Even a formally defined language still remains dependent on the interpretation of the metalanguage. However, it is not obvious how to circumvent the limitations of language.

Existence of natural order and the human ability to discover it are both underlined by Mariano Artigas as presuppositions of scientific endeavour \cite{Artigas00}. What is more, alongside the “intelligibility”, as the third presupposition of science Artigas enumerated an assertion that the search for truth is valuable, in other words, worthwhile. Under this assumption, many scholars, and I too, could devote their professional work to fundamental research such as the study of the lambda calculus.

\subsection{On a Universal Language}%
Henk Barendregt and Erik Barendsen in their “Introduction to Lambda Calculus” mention Gottfried Leibniz’s dream to create a universal language of problem description and a procedure to solve problems expressed in it \cite{BarendregtB84}. I maintain that, in a sense, as far as possible, Leibniz’s dream came true.

In the 1930s, the notion of algorithm, that is, strictly procedural routine, was formalized and accepted thanks to the conception of Turing machine \cite{Turing37}, and lambda calculus \cite{Church32} turned out to be a language that can express any computation that the machine can execute.

However, it is not that lambda calculus is just a language for programming in an arbitrary mathematical model. There were many models of “effective calculability” or “mechanical method” proposed. Some of them are Kurt Gödel's general recursive functions, Alonzo Church's lambda calculus, Alan Turing's machines, Andrey Markov Jr.'s string rewriting, and Andrey Kolmogorov's abstract notion of algorithm. All of them turned out to identically specify what is computable and what is not. For example, values of the Ackermann-Péter \cite{Peter35} function can be computed in any of them, while there is no general algorithm comparing two real numbers in none of them. It leads to a conviction, expressed in Church-Turing thesis, that any reasonable model of computation is equivalent to these enumerated above in the same sense as they are equivalent to each other \cite{Church36}. Thus computability seems to be a robustly defined notion, resilient to change in technical details as it would exist independently of human interest.

The construct of lambda abstraction was already present in John McCarthy’s Lisp programming language \cite{DBLP:journals/cacm/McCarthy60}. However, the fact that expressions of the lambda calculus themselves can be used as a programming language was spelled out by Peter Landin in the 1960s \cite{Landin65a,Landin65b}, and the calculus became the foundation of functional programming languages. Landin also offered a method of “mechanical evaluation” of lambda expressions via the SECD abstract machine designed for this purpose \cite{Landin64}. Despite the minimalism of the calculus, it enables use of higher-order functions by default, and thus programming at a high level of abstraction.

In summary, lambda calculus constitutes a very simple, functional, universal programming language. This makes the calculus an interesting research topic as it touches the essence of computability. In turn, abstract machines for the lambda calculus make visible the finitude behind the notion of algorithm that epitomizes human's finitude, and that is apparent in the head of Turing machine, single Markov's rule, or stated explicitly in Kolmogorov's definition as ``bounded complexity'' \cite{kolmogorov53}.
\newpage

\subsection{Practical Applications}
\label{sec:applications}
A belief in the profitableness of the lambda calculus research could be naive if it was not supported by its applications and was based only on the universality of the calculus.
The reason is that there are many Turing tarpits
\cite{DBLP:journals/sigplan/Perlis82} that are Turing-complete systems but are not fit to be practically used.
Although an enumeration of practical applications may be mundane,
it is nonetheless important.
Fortunately, the hopes placed in the lambda calculus are being fulfilled.

Just to illustrate, the lambda calculus is a subset of general-purpose, functional programming languages such as OCaml, Racket, and Haskell.
Thus, implementation issues concerning the calculus translate into matters of these languages.
Nowadays, the construct of lambda expression is present even in languages of great popularity and with wide applications such as Python, Java, and C++.


The lambda calculus was intended as the central object of study of this dissertation.
However, the area of lambda calculus research is still very wide.
Therefore, this research was focused on the operational semantics of the pure lambda calculus with particular emphasis on strong reduction.
It was chosen as an entry point to the study of lambda calculus because of small dependence on other theoretical constructs (Chapter~I presents fairly self-contained Coq formalization of more than twenty small-step semantics) and potential reuses in close research lines.
The research concerned also with an efficient implementation of small-step semantics in order to enable practical experimenting with them.
Maybe more importantly, efficient implementation of strong reduction finds applications in proof assistants, partial evaluation, and compilation of modules as described below.


Type checking in proof assistants requires to decide if given two types are equal, that is, more precisely, convertible.
In general, the $\beta$-convertibility of two lambda terms is undecidable because otherwise, we could decide the extensional equality of functions, and the halting problem.
However, in proof assistants, the type system ensures that the terms are normalizable, so both given terms can be normalized and then easily compared.

The goal of partial evaluation can be understood as a specialization of a function of two parameters, static and dynamic, given the static one.
Optimizations of the static parts take place under the abstraction of the dynamic parameter and thus can be seen as an example of strong reduction.

Most recently, Gabriel Scherer and Nathanaëlle Courant reported in the form of workshop presentation \cite{SchererC22} a use case of strong reduction in the compilation of OCaml's module system.
OCaml functors play there role of lambda abstractions.
Strong normalization of functors is executed in the service of the programmer during the compilation in order to locate places where values are declared and defined.

The ultimate test of the usefulness of the research presented in this dissertation will be the degree to which its results will be used.
However, the results were published very recently so it cannot be known at the moment.
Nevertheless, for example, the implementation of strong call by need in the OCaml's modules normalizer resembles much the normalizer of Chapter IV, so the upper bound obtained from its complexity analysis may be already applicable to the part of the OCaml compiler.


\newpage

\section{Methodological Remarks}

It was not until $11^\text{th}$ of October 2019, during the third semester of my doctoral studies, when, influenced by what I had been taught, I~implemented my own compositional full-reducing evaluator for the lambda calculus. Soon it turned out that it is connected with Crégut’s KN machine \cite{DBLP:journals/lisp/Cregut07}, indirectly through the functional correspondence \cite{DBLP:conf/ppdp/AgerBDM03}. Later I learned that I had reproduced a result from Johan Munk’s master thesis \cite{Munk:MS}. In January next year, I completed a derivation of the call-by-value variant of KN that became the titular contribution of Chapter~II. However, it was not until July, when we submitted the paper for a conference. Therefore, this subchapter is devoted to the reflection upon my work as a researcher that turned out to not come down to making alienated discoveries.

\subsection{Exploration of the Graph of Potential Knowledge}

For a few years now I have been one of those who imagine the body of all knowledge structured as a graph. Using the language of Platonic perspective, the graph I think of also contains what we do not know yet, that is the potential knowledge.

A helpful imagination of such a graph is a set of hypertext documents connected with hyperlinks between them. Wikipedias and Wikidata are good examples of such a structure. Furthermore, these projects, if properly read, constitute a useful approximation of the part of the Graph of Potential Knowledge which is already commonly known.

I conceptualize a part of a researcher's work as an exploration of this graph on the border of what we know and what we would like to know. I imagine that the edges of the graph have diverse weights because theorems are accessible, both by formal proof and by intuition, with different difficulties. Therefore, I think I follow the approach of Dijkstra's algorithm \cite{DBLP:journals/nm/Dijkstra59} on the subgraph of my research interests. I try to explore its neighbourhood to find easy paths of reasoning about problems considered hard. It is in order to follow the KISS (keep it simple) principle and not fall into a complexity that I would not be able to comprehend \cite{DBLP:journals/cacm/Dijkstra72}. It is where a metacognitive perspective helps to impose some discipline of thinking. In my experience, choosing the easiest questions from the pool of questions that have to be investigated anyway decreases the probability of getting stuck, and thanks to small successes, it increases the motivation to focus on research work. Nevertheless, I still sometimes get stuck because I forget about this guideline.

This approach brought me to a formalization of various reduction strategies and their properties that became the base of a research paper and Chapter I. Its motivation was also to pay off a bit of the research debt \cite{olah2017research} which is an analogue of technical debt in research.
It may arise when a hard research aim is accomplished,
but the presentation of work is not refined, measures taken were more difficult than needed, knowledge distributed in various research papers is not organized in one source etc.
It makes it harder to reuse results and techniques used by other researchers.
The problem may be summarized in words of William Thurston: ``mathematicians usually have fewer and poorer figures in their papers and books than in their heads'' \cite{Thurston94}. 
The research debt may for example slow down the pace of new researchers entering the area.

In my case, I have been experiencing a continued confusion about simple question if the innermost-head-spine strategy that appeared in Peter Sestoft’s paper \cite{DBLP:conf/birthday/Sestoft02} twenty years ago is a separate strategy from well-known head reduction in the sense of small-step operational semantics or not. Plausibly, this was obvious to experts, but not elaborated in the literature. I hope that now anyone with similar doubts will likely find our open-access publication that decides this one and similar questions with Coq proofs.

At least to the extent researcher’s work in the field of computing is based on formalizable reasoning, it resembles programmer’s work via Curry-Howard correspondence. Another rule of major importance to programmers is the DRY (don’t repeat yourself) principle. I think that one of the main motivations of programming is to complete repetitive tasks once and for all. Thus, repeating the same code would go against one of the purposes of programming. Less code means also less writing, reading, maintenance etc., and in general less work and less opportunities to make a mistake. Therefore, I try to follow the DRY principle also in research by developing reusable resources.
It involves contributing to Wikimedia projects, disclosing code developments, and sharing other supplementary materials.

While searching the Graph of Potential Knowledge, we can encounter issues that appear to be new, but are only unusual representations of what is already known. In order to decrease the number of concepts that I have in my head, I~try to name the things I come across and equalize them with everything equivalent early. I do it additionally also in my native language to capture the essence of a~given concept even better, often from a different perspective.

There is little literature in Polish on this topic. Therefore, I attach below a~list of translations that I currently use for this purpose for interested readers.

\subsection{Used Polish Translations}

\begin{tabular}{p{6cm}l}
abstract syntax            & składnia abstrakcyjna    \\
big-step semantics         & semantyka dużych kroków  \\
binder                     & wiązar                   \\
bypass                     & obejście                 \\
call by name               & wołanie przez nazwę      \\
call by need               & wołanie przez potrzebę   \\
call by value              & wołanie przez wartość    \\
closure                    & domknięcie               \\ 
conclusion                 & wniosek                  \\
concrete syntax            & składnia konkretna       \\
conservative extension     & zachowawcze rozszerzenie \\
context                    & kontekst                 \\
contrex                    & kontreks                 \\
continuation               & kontynuacja              \\
continuation-passing style & styl kontynuacyjny       \\
definitional interpreter   & interpreter definiujący  \\
decomposition              & rozkład                  \\
direct style               & styl bezpośredni         \\
dump                       & zrzut                    \\
eager                      & gorliwy                  \\
functional correspondence  & odpowiedniość funkcyjna  \\
ghost machine              & maszyna widmo            \\
if … then … else …         & jeśli … to … inaczej …   \\
implicit sharing           & niejawne współdzielenie  \\
inert term                 & term bezwładny           \\
instruction                & rozkaz                   \\
lambda calculus            & rachunek lambda                 
\end{tabular}

\begin{tabular}{p{6cm}l}
lazy                       & leniwy                   \\
memoization                & spamiętywanie            \\
normal order               & porządek normalny        \\
optional                   & opcjonał                 \\
partial evaluation         & częściowa ewaluacja      \\
pattern matching           & dopasowanie wzorca       \\
potential function         & funkcja potencjału       \\
premiss                    & przesłanka               \\
redex                      & redeks                   \\
reduction semantics        & semantyka redukcyjna     \\
refocusing                 & przeogniskowanie         \\
renaming                   & przemianowanie           \\
rigid term                 & term sztywny             \\
shape invariant            & niezmiennik kształtu     \\
small-step semantics       & semantyka małych kroków  \\
stack                      & stos                     \\
statement                  & instrukcja               \\
strong reduction           & silna redukcja           \\
structural operational
   semantics               & strukturalna semantyka
                                operacyjna            \\
stuck term                 & term zacięty             \\
uniform strategy           & strategia jednostajna    \\
weak reduction             & słaba redukcja          
\end{tabular}

\subsection{Empirical Method}

In the presence of Gödel's incompleteness theorem, Gregory Chaitin encourages mathematicians to bend to the empirical method because some theorems may not be provable in the classical sense.

An example of empirical evidence is given in Section 4.2 of Chapter IV. Abstract machines allow us to measure an exact, integer number of machine steps for a given input. This way we can map a sequence of inputs into a numerical sequence. For the initial terms of the sequences reported there, a closed formula has been found and presented. Because of the moderate complexity of inputs, it seems that the formula is correct for all subsequent terms. However, since the claims concerning closed formulas of complexities of particular term sequences are relatively less important, we spared an effort of preparing their formal proofs. Nevertheless, the statements are clear and most importantly falsifiable. A single counterexample would suffice to refute any of them.

Experimentation with implemented abstract machines can also help to study their properties. Before the RKNV machine of Chapter III was derived, I derived an improvement of KNV of Chapter~II that memoized neutral terms but not abstractions. In such a setting it was nearly effortless to test given sequences of inputs and observe if the exponential growth is present in the output sequence. The counterexample showed when we could not complete a proof of polynomial overhead of the leaky machine. Then the derivation of RKNV had to start over from a new normalizer. However, having the RKNV we could reuse the tests prepared for the previous machine. Then thanks to watching the machine runs generated for these examples, it was easier to complete the desired proof.
Currently, the cost of a new derivation is significantly reduced as it has been automated by Maciej Buszka and Dariusz Biernacki \cite{BuszkaB21:LOPSTR21}. We have benefited from it while constructing RKNL, and we describe this particular derivation in Chapter IV.

In my opinion, these examples show how computer experiments may accelerate research thanks to instant responses from it and thus that interactive theorem proving may be supported by both proof assistants and software development environments.

\subsection{Science Communication}

As I mentioned at the beginning of this subchapter, a researcher's work does not come down to making discoveries. What use would be of discoveries if they would not be shared with others? Therefore, a researcher also has to communicate the results of their work. Thus, to my surprise, it turned out that even a computer scientist has to use not only technical skills to operate on abstract and relatively simple objects but also soft skills to interact with concrete subjects with all their complexity because at the end of the day we are all people.

Traditionally, researchers articulate their thoughts in writing, which itself is a challenging task. Taking into account many aspects of human psychology may be very beneficial to describe a new idea. It includes arousing the interest, clarifying the presentation, setting the narrative, limiting the delay of gratification etc.

A major aspect of the scientific method is reproducibility. To ease replication of results, every research paper that makes up a chapter of this dissertation is accompanied by code. I believe that formal languages are also effective means of communication between people.

The development of the Internet facilitates novel forms of communication to take advantage of. My academic website\footnote{\href{https://ii.uni.wroc.pl/~tdr/}{https://ii.uni.wroc.pl/\~{}tdr/}} is planned to present an organized view of my scientific activity, especially this part that will not be covered by this dissertation. Additionally, most of the conference presentations of the four technical chapters will be available as online videos linked by the website. More importantly, my website should provide my email address so, where possible, I can answer questions and present my explanation of matters that may be unclear.

Finally, I would like to briefly mention the role of the word between the two, that is of the dialogue. I could experience that, even in the professional setting, it can turn a disagreement and deep-seated, mutual misunderstanding into a nuanced and more accurate perspective.
Therefore, I think it is worthwhile to exercise what is at the service of dialogue.

\newpage

\section{Technical Introduction}

The technical introduction constitutes preliminaries of technical content of the dissertation.
It is intended to contain simple examples that demonstrate concepts considered in the next chapters and to present broader context of their contributions.

\subsection{Metalanguage}

A \emph{metalanguage} is a language used to describe another language, called the object language or the \emph{defined language} in case of a definition.
As an example let us take CPython that is, to simplify, an implementation of Python language written in C.
If we would take CPython as a definition of the Python language, then the C language plays the role of the metalanguage and Python is the defined language.

We are about to use the English language with borrowings from mathematical formalism to define the lambda calculus from the ground up.
However, first we will define a simple language of additive expressions to illustrate the definition process
and to set the stage for simple examples of reduction strategies and abstract machines.
Later we will also use OCaml as a metalanguage to define semantics, that is meaning, of additive expressions again.

\subsection{Additive Expressions}

We define abstract syntax of additive expressions by a context-free grammar:
$$ \mathcal E \ni  e ::= \const n \alt e_1 \oplus e_2 $$
where metavariable $n$ ranges over the set of natural numbers $\mathbb N$. The two types of expressions are constants and sums. There are no variables in the defined language.
The idea is that the expression $\const 2 \oplus \const 2$ models the metaexpression ``$2 + 2$'' of our language in the defined language,
so parts of language can become object of our study.  

We use the equality sign = to relate only objects identical at the metalevel. Thus, we have $\const 2 \oplus \const 2 \neq \const 4$ because from the perspective of the metalevel, $\oplus$ does not denote addition, but is an expression constructor.
Using mathematical notation we can describe $\oplus$ as a function that takes two expressions and returns an~expression as follows: $\oplus : \mathcal E \to \mathcal E \to \mathcal E$.  

To denote metalevel addition, as is customary, we will use the plus sign $+$ with the following type at the metalevel $+ : \mathbb N \to \mathbb N \to \mathbb N$.
Thus, the well-known arithmetic fact $2 + 2 = 4$ still holds in our notation.

We work with abstract syntax, as opposed to concrete syntax which is linear, so we understand expressions as trees of derivations from the grammar.
We use parentheses to disambiguate the derivation of an expression from the grammar.
Thus, we also have $\const 3 \oplus (\const 4 \oplus \const 5) \neq (\const 3 \oplus \const 4) \oplus \const 5$, while $3 + (4 + 5) = (3 + 4) + 5$.
The operation that $\oplus$ models, i.e. addition, is associative, but, as we have seen, as a constructor of additive expressions, $\oplus$ is non-associative.

Having formally defined the syntax of the calculus, practically as a datatype, we can programme, that is formally define, functions operating on expressions as, for example, depth of an expression:
\begin{alignat*}{3}
\text{depth}\phantom{(}&&: \mathcal E \to \mathbb N\\
\text{depth}(&&\const n) &= 0\\
\text{depth}(&&e_1 \oplus e_2) &= 1 + \max(\text{depth}(e_1), \text{depth}(e_2))
\end{alignat*}
Note that the expressions $\const 2 \oplus \const 2$ and $\const 4$ have different depths: $$\text{depth}(\const 2 \oplus \const 2) = 1 \neq 0 = \text{depth}(\const 4).$$
Since they have even different depths, they cannot be equal.
However, we would like to model that the the result of evaluation of expression ``2 + 2'' is ``4''.

For this purpose, we define a \emph{contraction} relation $\contr +$ that simplifies sums of constants in the following way: $\const{n_1} \oplus \const{n_2} \;\contr +\; \const{n_1 + n_2}$.
Then we have $\const 2 \oplus \const 2 \;\contr +\; \const 4$.
However, still $(\const 3 \oplus \const 4) \oplus \const 5 \not\rightharpoonup_+ \const 7 \oplus \const 5$ because the subexpression $\const 3 \oplus \const 4$ is not of the form $\const n$.
As usually, the contraction relation denotes a local operation that needs to be extended in order to operate on the whole expression globally.
Therefore, we want to define a \emph{reduction} relation that can simplify also the sums of subexpressions.

To this end, we define notion of \emph{context} by the following grammar.
$$\mathcal G \ni G ::= \hole \alt G \oplus e \alt e \oplus G$$
A context can be understood as a expression with exactly one \emph{hole} (denoted $\hole$).
A~context can serve to point a position of a subexpression.
For example, a context $\hole \oplus \const 5$ is a context of the subexpression $\const 3 \oplus \const 4$ in the expression $(\const 3 \oplus \const 4) \oplus \const 5$.
The whole expression can be recomposed using the \emph{plug} function defined below.
\begin{alignat*}{3}
\plug \argfill \argfill &: \mathcal G \to \mathcal E \to \mathcal E\\
\plug \hole {e_0}  &= e_0\\
\plug {(G \oplus e)} {e_0} &= \plug G {e_0} \oplus e\\
\plug {(e \oplus G)} {e_0} &= e \oplus \plug G{e_0}
\end{alignat*}
Then we have
$\plug {(\hole \oplus \const 5)} {\const 3 \oplus \const 4} =
(\plug \hole {\const 3 \oplus \const 4}) \oplus \const 5 =
(\const 3 \oplus \const 4) \oplus \const 5$.
Having the plug function, we can extend any contraction $\contr \iota$ to a reduction $\red {} \iota$ by a \emph{contextual closure}.

\begin{center}\begin{prooftree}
\hypo{e_1 \;\contr \iota \; e_2}
\infer1{\plug G {e_1} \;\red{}\iota\; \plug G {e_2}}
\end{prooftree}\end{center}

Finally, we can state that  $(\const 3 \oplus \const 4) \oplus \const 5 \red {} + \const 7 \oplus \const 5$
and $\const 7 \oplus \const 5 \red {} + \const {12}$.
Thus, we can say that $(\const 3 \oplus \const 4) \oplus \const 5$ reduces to $\const {12}$ in exactly two steps and cannot reduce to it in one step $(\const 3 \oplus \const 4) \oplus \const 5 \not \to_+ \const {12}$.

The reflexive-transitive closure of \emph{reduction in one step} $\red {} \iota$ is denoted by $\rred {}\iota$, and called \emph{reduction in zero or more steps}.
The reflexive-symmetric-transitive closure is denoted by
$\converts {}\iota$, and is called \textit{conversion}.
Therefore, $(\const 3 \oplus \const 4) \oplus \const 5$ and $\const {12}$ are convertible to each other: $(\const 3 \oplus \const 4) \oplus \const 5 =_+ \const {12}$.

Note the difference between the metalevel equality = and conversion $=_+$ obtained by the contextual and reflexive-symmetric-transitive closures:
$\const 2 \oplus \const 2$ and $\const 4$ are two different expressions:
$\const 2 \oplus \const 2 \neq \const 4$,
but they are convertible to each other:
$\const 2 \oplus \const 2 =_+ \const 4$.
Similarly, $\oplus$ is not associative with respect to metalevel equality: $\const 3 \oplus (\const 4 \oplus \const 5) \neq (\const 3 \oplus \const 4) \oplus \const 5$, but is associative w.r.t. $+$-conversion: $\const 3 \oplus (\const 4 \oplus \const 5) =_+ (\const 3 \oplus \const 4) \oplus \const 5$.

This way we have defined small-step \emph{reduction semantics} \cite{DBLP:journals/tcs/FelleisenH92} of additive expressions.
The reduction $\red {} +$ defines a single, small step of computation,
and the conversion $=_+$ relates the expression, among others, with the irreducible result of the whole computation.
Convertible expressions, as $\const 2 \oplus \const 2$ and $\const 4$, are considered to have the same meaning.
The meaning is not trivial because there are expressions that are inconvertible $\const 0 \neq_+ \const 1$,
so their replacement would be meaningful.

\subsection{Basics of Lambda Calculus}

The abstract syntax of the pure lambda calculus is given by the following context-free grammar:
$$ \Lambda \ni  t ::= x \alt \tapp{t_1}{t_2} \alt \tlam x t $$
where the metavariable $x$ ranges over some set of identifiers $\mathcal X$. The three types of expressions are variables, applications and abstractions. In a sense, applications correspond to specialization and abstractions to generalization.

Similarly, we use parentheses to disambiguate the derivation.
Moreover, we use unambiguous notational shorthands, including left associativity of the application: $\tapp{\tapp xy}z = \tapp{(\tapp xy)}z \neq \tapp x{(\tapp yz)}$, extension of a body of an abstraction as far as possible: $\tlam x {\tapp xx} = \tlam x {(\tapp xx)} \neq \tapp {(\tlam x x)} x$, and coalescing consecutive abstractions into one: $\tlam {xy} x = \tlam x {\tlam y x}$. Examples of lambda expressions are given in Table~\ref{tab:terms}.

\begin{table}[h!!]
\hspace{2.7cm}
\begin{tabular}{rcll}
$I$             & = & $\tlam x x$         & identity                             \\
$K$             & = & $\tlam {xy} x$        & constant function constructor        \\
$S$             & = & $\tlam {xyz} { \tapp{\tapp xz}{(\tapp yz})}$  & parametrized application constructor \\
{$\omega$}        & = & $\tlam x {\tapp xx}$        & self-applicator                      \\
{$\Omega$}        & = & $\tapp \omega\omega$       & canonical divergent term             \\
{\textit{pair}} & = & $\tlam {xyf} {\tapp {\tapp fx}y}$     & pair constructor                     \\
$c_0$            & = & $\tlam {fx} x$        & Church numeral 0                     \\
$c_1$            & = & $\tlam {fx} {\tapp fx}$       & Church numeral 1                     \\
$c_2$            & = & $\tlam {fx} {\tapp f{(\tapp fx)}}$    & Church numeral 2                     \\
$c_3$            & = & $\tlam {fx} {\tapp f{(\tapp f{(\tapp fx)})}}$ & Church numeral 3                     \\
\end{tabular}
\caption{Example lambda terms}
\label{tab:terms}
\end{table}

As we can see, there is abuse of notation in using $x$ as a metavariable and a~concrete name of object variable
at the same time.
It would be less ambiguous to use a separate metavariable as a nonterminal representing variables. If we would take $\varkappa$ as such metavariable, we would have $\Lambda \ni  t ::= \varkappa \alt \tapp{t_1}{t_2} \alt \tlam \varkappa t$. Then, we could say that $x$, $y$, $z$, $f$ are just elements of $\mathcal X$.
However, the convention with a~double role of $x$ is so widespread that we will stick to it.

Similarly as for additive expressions, we can define functions working on lambda terms. Below, sets of \emph{free} and \emph{bound variables} in a term, and the \emph{capture-agnostic substitution} function are presented:

\begin{alignat*}{7}
\fv &: \Lambda \to \mathcal P_{\text{fin}}(\mathcal X) &
\bv &: \Lambda \to \mathcal P_{\text{fin}}(\mathcal X)\\
\fv{(x)} &= \{ x \} &
\bv{(x)} &= \emptyset\\
\fv{(\tapp {t_1} {t_2})} &= \fv{(t_1)} \cup \fv{(t_2)} &
\hspace{12mm}
\bv{(\tapp {t_1} {t_2})} &= \bv{(t_1)} \cup \bv{(t_2)}\\
\fv{(\tlam x {t})} &= \fv{(t)} \setminus \{ x \} &
\bv{(\tlam x {t})} &= \bv{(t)} \cup \{ x \}
\end{alignat*}
\begin{alignat*}{1}
\subst \argfill \argfill \argfill &: \Lambda \to \mathcal X \to \Lambda \to \Lambda\\ 
\subst {x} {t} {{x'}} &= \begin{cases} t &: x = x' \\ {{x'}} &: x \neq x'\end{cases}\\
\subst x {t} {(\tapp {t_1} {t_2}) } &= \tapp {\subst x {t} {t_1}} {\subst x {t} {t_2}}\\
\subst {x} {t} {(\tlam {x'} {t'})} &= \begin{cases} \tlam {x'} {t'} &: x = x' \\ \tlam {x'} {\subst {x} {t} {t'}} &: x \neq x'\end{cases}
\end{alignat*}

The terms with no free variables are called \emph{closed}, and the others are called \emph{open}. A context in the lambda calculus can be seen as a term with exactly one free occurrence of a special variable $\hole$. The notation $\plug C t$ is then a shortcut for $\subst \hole t C$. Assuming that $\hole$ is not used as a bound variable, contexts in the lambda calculus can be defined by the following grammar:
$$C ::=
  \tapp C t \alt
  \tapp t C \alt
  \tlam x C \alt
  \hole$$

The name lambda calculus comes from the letter lambda of abstraction syntax that serves to define functions. Since the choice of the letter lambda is accidental, so is the name lambda calculus. To get closer to its essence, it can be thought of as a calculus of functions or a calculus of abstractions.
The intended meaning of a~lambda abstraction is as follows (on an example of $\tlam x {\tapp xy}$):
\begin{center}
\begin{tabular}{|l|c|c|c|c|}
\hline
notation: & $\lambda$                     & $x$  & .           & $\tapp x y$              \\
\hline
reading:  & a function that takes & ex & and returns & ex applied to wye \\
\hline
\end{tabular}
\end{center}
 
Thus it would be intuitive to equalize terms of the form $\tapp {(\tlam x {t_1})} {t_2}$ with $\subst x {t_2} {t_1}$. It would work perfectly for the identity or substitution of closed terms. Let us define such a contraction $\contr \nv$ as follows $\tapp {(\tlam x {t_1})} {t_2} \;\contr \nv\; \subst x {t_2} {t_1}$, and see that identity $I$ indeed returns exactly what it obtains $\tapp I K = \tapp {(\tlam x x)} K \;\red {} \nv\; \subst x K x = K$, while the constant function of identity returns the identity regardless of its argument:
$$\tapp {\tapp K I} S =
\tapp {\tapp {(\tlam {xy} x)} I} S \;{\red {} \nv}\;
\tapp {\subst x I {(\tlam y x)}} S =
\tapp {(\tlam y I)} S \;\red {} \nv\;
\subst y S I = I$$
However, the constant function of a variable $y$ would not return $y$ because of the variable capture:
$$\tapp {\tapp K y} S =
\tapp {\tapp {(\tlam {xy} x)} y} S \;{\red {} \nv}\;
\tapp {\subst x y {(\tlam y x)}} S =
\tapp {(\tlam y x)} S \;\red {} \nv\;
\subst y S y = S \neq y$$

Moreover, we can easily see that $(\tlam x x)$ and $(\tlam y y)$ describe in practice the same function. Therefore, we want to define them as $\alpha$-convertible, that is equivalent up to $\alpha$-renaming. We obtain the conversion from the $\alpha$-contraction sufficient to rename bound variables:

\begin{center}\begin{prooftree}
  \hypo{x' \notin \fv (t) \cup \bv (t)}
\infer1{\tlam {x} t \; \contr \alpha \; \tlam {x'} {\subst {x} {x'} t} }
\end{prooftree}\end{center}

We can define operational semantics for the lambda calculus in the form of reduction semantics that specifies $\beta$-contraction as the single computation step.
We define $\beta$-contraction in the standard way
(and its contextual closure determines $\beta$-reduction):

\begin{center}\begin{prooftree}
  \hypo{t_1 \; \converts {} \alpha \; t'_1}
  \hypo{{\bv {(t'_1)}} \cap {\fv {(t_2)}} = \emptyset}
\infer2{\tapp {(\tlam x {t_1})} {t_2}
\; \contr \beta \;
\subst x {t_2} {t'_1}}
\end{prooftree}\end{center}
Thanks to the substitutability check, that is the second premiss of the rule above, the variable capture is avoided:
$$\tapp {\tapp K y} S =
\tapp {\tapp {(\tlam {xy} x)} y} S \;{\red {} \beta}\;
\tapp {\subst x y {(\tlam z x)}} S =
\tapp {(\tlam z x)} S \;\red {} \beta\;
\subst z S y = y$$
In short, the constant function constructor works as it should:
$\tapp {\tapp K y} S \;\converts {} \beta\; y$.

The definition can be simplified by \emph{capture-avoiding substitution} that generates fresh variable names on the fly for any bindings in $t_1$. With such an interpretation of the substitution symbol, the premisses in the $\beta$-contraction rule can be omitted.

\subsection{Nameless Representations of Functions}

Representation of functions with named variables entails a degree of freedom. For example, the identity function can be represented by $(\tlam x x)$ or $(\tlam y y)$, or a term with any other variable name. However, variable names can be abstracted out from closed terms in the following sense. There exists a translation function $\alpha$ such that $\alpha(t_1) = \alpha(t_2)$ if and only if $t_1 =_\alpha t_2$. Two possible representations are named after Nicolaas Govert de Bruijn, and they are de Bruijn indices and de Bruijn levels representations. Both of them use the following syntax of terms.
$$ \Lambda_\alpha \ni  t ::= n \alt \tapp{t_1}{t_2} \alt \lambda t $$

Lambda abstractions do not use variable names, and variables use natural numbers instead of names.
In the indices representation, variables identify their binder by giving the number of binders between them in the abstract syntax tree, while in the levels representation by giving the number of binders between the pointed binder and the root of the syntax tree.
Let us denote their translation functions as $\text{ind}: \Lambda_\text{closed} \to \Lambda_\alpha$ and $\text{lev}: \Lambda_\text{closed} \to \Lambda_\alpha$.
Then we have for example
$\text{ind}(\tlam x {\tapp x {\tlam y {\tapp x y}}}) = \lambda \tapp 0 {\lambda \tapp 1 0}$ and
$\text{lev}(\tlam x {\tapp x {\tlam y {\tapp x y}}}) = \lambda \tapp 0 {\lambda \tapp 0 1}$.

Some abstract machines work directly on de Bruijn representations. In case of strong reduction, it is required to shift de Bruijn indices or levels while performing the substitution
because the distance in the number of binders between a variable and its binder can change.
The appropriate shiftings are shown for example in \cite{DBLP:journals/lisp/Cregut07}.

\subsection{Reduction Strategies}
\label{sec:strategies}

The reduction order we have defined is nondeterministic. Specifically, the expression $(\const 1 \oplus \const 2) \oplus (\const 4 \oplus \const 8)$ can be reduced in two ways: reduce the sum $\const 1 \oplus \const 2$ first or $\const 4 \oplus \const 8$.
In order to impose a deterministic evaluation order, we can define a reduction strategy. We can do that by restricting the grammar of general contexts. Contextual closure under contexts $L ::= \hole \;|\; L \oplus e \;|\; \const n \oplus L$ gives left-to-right evaluation order, while under $R ::= \hole \;|\; R \oplus \const n \;|\; e \oplus R$ gives right-to-left order. We will denote restricted reduction with a nonterminal representing the restricted family of contexts over the arrow.
$$(\const 1 \oplus \const 2) \oplus (\const 4 \oplus \const 8) \;\red L +\;
\const 3 \oplus (\const 4 \oplus \const 8) \;\red L +\;
\const 3 \oplus \const {12} \;\red L +\;
\const{15}$$
$$(\const 1 \oplus \const 2) \oplus (\const 4 \oplus \const 8) \;\red R +\;
(\const 1 \oplus \const 2) \oplus \const {12} \;\red R +\;
\const 3 \oplus \const {12} \;\red R +\;
\const{15}$$
$$(\const 1 \oplus \const 2) \oplus (\const 4 \oplus \const 8) \;\nred L +\;
(\const 1 \oplus \const 2) \oplus \const {12} \phantom{\;\red R +\;
\const 3 \oplus \const {12} \;\red R +\;
\const{15}}$$

In the presence of functions, there are more interesting strategy choices. Informally, let us consider a function $f$ defined by the formula ``$f(x) = x + x$'' and the expression ``$f(3 + 7)$''.
We can do a step of the computation substituting the argument unevaluated and obtain an expression ``$(3 + 7) + (3 + 7)$''.
In the lambda calculus, this approach is called call by name.
Alternatively, we can evaluate the argument ``$3 + 7$'' to obtain the expression ``$f(10)$''
and only then perform the substitution to obtain ``$10 + 10$''.
This approach is called call by value.

Formally, call by name ($\mathit{cbn}$ of \cite{I}) step can be defined as contextual closure of $\beta$-contraction under the family of contexts $Q ::= \hole \;|\; \tapp Q t$.
In order to obtain a deterministic call-by-value reduction, we first need to restrict $\beta$-contraction to $\beta_\lambda$-contraction: $\tapp {(\tlam {x_1} {t_1})} {(\tlam {x_2} {t_2})} \;\contr {\beta_\lambda}\; \subst {x_1} {(\tlam {x_2} {t_2})} {t_1}$ under analogous assumptions as ordinary $\beta$-contraction.
Then left-to-right call-by-value reduction ($\mathit{lcbv}$ of \cite{I}) is obtained by a contextual closure under $E ::= \hole \;|\;\tapp E t\;|\; \tapp {(\tlam x t)} E$, while right-to-left ($\mathit{rcbv}$ of \cite{I}) under $F ::= \hole \;|\; \tapp F {(\tlam x t)}\;|\; \tapp t F$. Both call by name and call by value give rise to Turing-complete programming languages.

As visible in the example of ``$f(3 + 7)$'', the choice of the reduction strategy may affect the number of steps to achieve the final result. It may also happen that a computation terminates in one reduction strategy and diverges in another as in the case of $\tapp {\tapp K I} \Omega$ in call by name and call by value respectively:

\begin{center}
\begin{tabular}{cccccccc}
$\tapp {\tapp K I} \Omega$ & $\;\red Q {\beta_{\phantom{\lambda}}}\;$ &
$\tapp {(\tlam z I)}  \Omega$ & $\;\red Q {\beta_{\phantom{\lambda}}}\;$ &
$I$ & $\nred {} {\beta_{\phantom{\lambda}}}$\\
$\tapp {\tapp K I} \Omega$ & $\;\red E {\beta_\lambda}\;$ &
$\tapp {(\tlam z I)} \Omega$ &  $\;\red E {\beta_\lambda}\;$&
$\tapp {(\tlam z I)} \Omega$ &  $\;\red E {\beta_\lambda}\;$& \ldots\\
$\tapp {\tapp K I} \Omega$ & $\;\red F {\beta_\lambda}\;$ &
$\tapp {\tapp K I} \Omega$ &  $\;\red F {\beta_\lambda}\;$&
$\tapp {\tapp K I} \Omega$ &  $\;\red F {\beta_\lambda}\;$& \ldots\\
\end{tabular}
\end{center}

Nevertheless, even if two strategies part ways, the obtained expressions can be always brought back together by full $\beta$-reduction ($\mathit{full\beta}$ of \cite{I}), as guaranteed by the Church-Rosser theorem.

However, all the three reduction strategies from the paragraph above are weak strategies. This means that they do not reduce expressions under lambdas. For example, $\tlam x {\tapp {(\tlam y y)} x}$ is irreducible in any of the three but reduces in full beta-reduction: $\tlam x {\tapp {(\tlam y y)} x} \;\red {} \beta\; I$. 

An example of a strong strategy is normal-order reduction ($\mathit{no}$ of \cite{I}). It is obtained by contextual closure of ordinary $\beta$-contraction under leftmost-outermost contexts expressed by the nonterminal $N$.
\begin{alignat*}{10}
N &::= \overline{N} \alt \tlam x N&&&
n &::= & \; \tlam x n &\alt a\\
\overline{N} &::= \hole \alt \tapp {\overline{N}} t \alt \tapp a N&&\hspace{15mm}&
a &::= & \tapp a n &\alt x
\end{alignat*}

Normal-order reduction extends the call-by-name strategy, hence it is sometimes called strong call by name. All technical chapters of this dissertation consider strong reduction strategies built over weak ones.

\subsection{Abstract Machines for Additive Expressions}

Lambda calculus is a relatively convenient computational model to program in it directly. On the other hand, computational complexity issues are much less evident than in the model of Turing machines. Even a single $\beta$-reduction step requires searching for a contraction's context and performing the substitution while evading variable captures. These are nonlocal operations whose costs are nonobvious. However, this problem can be avoided by specifying the cost model for reduction via other abstract machines. Their desired property, enjoyed by Turing machines by design, is that they can always perform one step of computation in a fixed measure of time. Then the number of machine steps is a realistic cost model of time complexity.

We call these machines abstract because they abstract from details of a physical computer. They could also be called theoretical or logical because they are formally defined mathematical objects.

As an example, let us consider an abstract machine for additive expressions. It is presented in Figure~\ref{fig:machine}.
Its configuration consists of a focused expression, a stack representing evaluation contexts, and a direction tag that informs if the focused expression is already evaluated. Their grammar is given in the top part of the figure. A computation starts with the initial configuration focused on the root of the given expression $e$, with the empty context represented by $\nil$, and tag directed downwards: $\etrian$. The computation step consists in applying one of the four transition rules whose left-hand side matches the current configuration. The labels on the right are names of transitions used for convenience. The computation terminates if none of the rules applies. The result is then stored as the focused expression, in the empty context with the tag directed upwards: $\ctrian$.

\begin{figure}[h!]
\caption{Left-to-right abstract machine for additive expressions}
\label{fig:machine}
\vspace{-6mm}
\begin{alignat*}{3}
\textit{\textrm{Expressions}} \ni && e &:= \const n \alt e_1 \oplus e_2\\
\textit{\textrm{Frames}} \ni &\;& F &::= \hole \oplus e \alt \const n \oplus \hole\\
\textit{\textrm{Stacks}} \ni && S &::= \nil \alt \cons F S\\
\textit{\textrm{Configurations}} \ni && k &::= \econf e S \alt \cconf {\const n} S\\
\textit{\textrm{Transitions:}} \phantom{\ni} &&
\end{alignat*}%
\vspace{-12mm}
\begin{align}
e &\mapsto \econf e \nil\nonumber\\
\econf {e_1 \oplus e_2} S &\to
\econf {e_1} {\cons {\hole \oplus e_2} S} \tag{$\leftstra$} \label{tr:1}\\
\econf {\const n}  S &\to
\cconf {\const n}  S \tag{$\,\uparrow\,$} \label{tr:2}\\
\cconf {\const n}  {\cons {\hole \oplus e} S} &\to
\econf {e}  {\cons {\const n \oplus \hole} S} \tag{$\rightstra$} \label{tr:3}\\
\cconf {\const {n_2}}  {\cons {\const {n_1} \oplus \hole} S} &\to
\cconf {\const {n_1 + n_2}}  S \tag{$+$} \label{tr:4}\\
\cconf {\const n} \nil &\mapsto n\nonumber
\end{align}
\end{figure}

An example run of the machine is shown in Figure~\ref{fig:run}.
The first two steps of the computation to the same transition rule (\ref{tr:1})
and put the frames of the context ${({\hole} \oplus {\const 2})} \oplus {{({\const 4} \oplus {\const 8})}}$ of the subexpression $\const 1$ on the stack.
Subexpressions $\const 1$ and $\const 2$ are recognized as values in the third and the fifth step respectively
and are added in the sixth step.
After thirteen steps any of the transition rules cannot be applied, and the configuration matches the shape of a terminal configuration.
This way the initial expression has been reduced to $\const {15}$.

\begin{figure}[h!]
\caption{Example run of an abstract machine}
\label{fig:run}
\vspace{-6mm}
\begin{align*}
0:&& \langle {{{{({\const 1} \oplus {\const 2})}} \oplus {{({\const 4} \oplus {\const 8})}}}},&& {\nil}\rangle_\etrian &\stackrel{(\text{\ref{tr:1}})}{\to}\\[-0.5ex]
1:&& \langle {{{\const 1} \oplus {\const 2}}},&& {{\cons {{({\hole} \oplus {{({\const 4} \oplus {\const 8})}})}} {\nil}}}\rangle_\etrian &\stackrel{(\text{\ref{tr:1}})}{\to}\\[-0.5ex]
2:&& \langle {\const 1},&& {{\cons {{({\hole} \oplus {\const 2})}} {{\cons {{({\hole} \oplus {{({\const 4} \oplus {\const 8})}})}} {\nil}}}}}\rangle_\etrian &\stackrel{(\text{\ref{tr:2}})}{\to}\\[-0.5ex]
3:&& \langle {\const 1},&& {{\cons {{({\hole} \oplus {\const 2})}} {{\cons {{({\hole} \oplus {{({\const 4} \oplus {\const 8})}})}} {\nil}}}}}\rangle_\ctrian &\stackrel{(\text{\ref{tr:3}})}{\to}\\[-0.5ex]
4:&& \langle {\const 2},&& {{\cons {{({\const 1} \oplus {\hole})}} {{\cons {{({\hole} \oplus {{({\const 4} \oplus {\const 8})}})}} {\nil}}}}}\rangle_\etrian &\stackrel{(\text{\ref{tr:2}})}{\to}\\[-0.5ex]
5:&& \langle {\const 2},&& {{\cons {{({\const 1} \oplus {\hole})}} {{\cons {{({\hole} \oplus {{({\const 4} \oplus {\const 8})}})}} {\nil}}}}}\rangle_\ctrian &\stackrel{(\text{\ref{tr:4}})}{\to}\\[-0.5ex]
6:&& \langle {\const 3},&& {{\cons {{({\hole} \oplus {{({\const 4} \oplus {\const 8})}})}} {\nil}}}\rangle_\ctrian &\stackrel{(\text{\ref{tr:3}})}{\to}\\[-0.5ex]
7:&& \langle {{{\const 4} \oplus {\const 8}}},&& {{\cons {{({\const 3} \oplus {\hole})}} {\nil}}}\rangle_\etrian &\stackrel{(\text{\ref{tr:1}})}{\to}\\[-0.5ex]
8:&& \langle {\const 4},&& {{\cons {{({\hole} \oplus {\const 8})}} {{\cons {{({\const 3} \oplus {\hole})}} {\nil}}}}}\rangle_\etrian &\stackrel{(\text{\ref{tr:2}})}{\to}\\[-0.5ex]
9:&& \langle {\const 4},&& {{\cons {{({\hole} \oplus {\const 8})}} {{\cons {{({\const 3} \oplus {\hole})}} {\nil}}}}}\rangle_\ctrian &\stackrel{(\text{\ref{tr:3}})}{\to}\\[-0.5ex]
10:&& \langle {\const 8},&& {{\cons {{({\const 4} \oplus {\hole})}} {{\cons {{({\const 3} \oplus {\hole})}} {\nil}}}}}\rangle_\etrian &\stackrel{(\text{\ref{tr:2}})}{\to}\\[-0.5ex]
11:&& \langle {\const 8},&& {{\cons {{({\const 4} \oplus {\hole})}} {{\cons {{({\const 3} \oplus {\hole})}} {\nil}}}}}\rangle_\ctrian &\stackrel{(\text{\ref{tr:4}})}{\to}\\[-0.5ex]
12:&& \langle {\const {12}},&& {{\cons {{({\const 3} \oplus {\hole})}} {\nil}}}\rangle_\ctrian &\stackrel{(\text{\ref{tr:4}})}{\to}\\[-0.5ex]
13:&& \langle {\const {15}},&& {\nil}\rangle_\ctrian
&\stackrel{\phantom{(\text{\ref{tr:1}})}}{\nrightarrow}
\end{align*}
\vspace{-6mm}
\end{figure}

Each configuration of the machine can be decoded into a decomposition of the current expression into a focused subexpression and its context. A decomposition can be recomposed into the expression by plugging the subexpression into the context. The decoding with examples is given in Figure~\ref{fig:decoding}.
Formally, there are defined three decoding functions whose domains are stacks, configurations, and decompositions.
However, here the operator of underlining is overloaded and denotes each of the three functions.
Note that the double underlining can be used to decode configuration into recomposed expression.
The first example shows the decoding of the stack ${{\cons {{({\hole} \oplus {{({\const 4} \oplus {\const 8})}})}} {\nil}}}$ into the context ${{\hole} \oplus {{({\const 4} \oplus {\const 8})}}}$ step by step.
The second example shows that the configuration after the sixth step of the execution from Figure~\ref{fig:run} represents the expression ${{{\const 3} \oplus {{({\const 4} \oplus {\const 8})}}}}$.

\begin{figure}[h!]
\caption{Decoding of the left-to-right abstract machine for additive expressions}
\label{fig:decoding}\vspace{-6mm}
\begin{alignat*}{3}%
\underline{\argfill} & : \textit{Stacks} \to \textit{Contexts}\\
\underline{\nil} & = \hole \hspace{16mm}
\underline{\cons {(\hole \oplus e)} S}  = \plug {\underline{S}} {\hole \oplus e}\hspace{16mm}
\underline{\cons {(\const n \oplus \hole)} S}  = \plug {\underline{S}} {\const n \oplus \hole}\\
\underline{\argfill} & : \textit{Configurations} \to \textit{Expressions} \times \textit{Contexts}\\
\underline{\econf e S} & = (e, \underline S) \hspace{19mm}
\underline{\cconf {\const n} S}   = (\const n, \underline S)\\
\underline{\argfill} & : \textit{Expressions} \times \textit{Contexts} \to \textit{Expressions}\\
\underline{(e, G)} & = \plug G e
\end{alignat*}%
\begin{alignat*}{2}
\underline{\langle {{{\const 1} \oplus {\const 2}}}, {{\cons {{({\hole} \oplus {{({\const 4} \oplus {\const 8})}})}} {\nil}}}\rangle_\etrian} &=
({{{\const 1} \oplus {\const 2}}}, \underline{{\cons {{({\hole} \oplus {{({\const 4} \oplus {\const 8})}})}} {\nil}}})\\
&=
({{{\const 1} \oplus {\const 2}}}, \underline{\nil}[{{\hole} \oplus {{({\const 4} \oplus {\const 8})}}}])\\
&=
({{{\const 1} \oplus {\const 2}}}, \hole[{{\hole} \oplus {{({\const 4} \oplus {\const 8})}}}])\\
&=
({{{\const 1} \oplus {\const 2}}}, {{\hole} \oplus {{({\const 4} \oplus {\const 8})}}})
\end{alignat*}
$$\underline{({{{\const 1} \oplus {\const 2}}}, {{\hole} \oplus {{({\const 4} \oplus {\const 8})}}})} = {{{{({\const 1} \oplus {\const 2})}} \oplus {{({\const 4} \oplus {\const 8})}}}}$$
\begin{alignat*}{2}
\underline{\underline{\langle {\const 3}, {{\cons {{({\hole} \oplus {{({\const 4} \oplus {\const 8})}})}} {\nil}}}\rangle_\ctrian}}
&= \underline{(\const 3, {{\hole} \oplus {{({\const 4} \oplus {\const 8})}}})}\\
&= \plug {({{\hole} \oplus {{({\const 4} \oplus {\const 8})}}})} {\const 3}\\
&= {{{\const 3} \oplus {{({\const 4} \oplus {\const 8})}}}} 
\end{alignat*}
\vspace{-6mm}
\end{figure}

Influenced by the decoding of the configurations to decompositions, we can present the run from Figure~\ref{fig:run} as in Figure~\ref{fig:refocusing}.
The notation is called \emph{refocusing notation} in Chapter IV.
The name refers to Olivier Danvy's and Lasse R. Nielsen's concept of refocusing \cite{DBLP:journals/tcs/DanvyN01} which they connected to abstract machines in \cite{Danvy04refocusingin}.
The idea is that the machine looks at the given expression, and the only thing that transitions (\ref{tr:1}), (\ref{tr:2}), and (\ref{tr:3}) do is that they locally change the machine's focus between subexpressions
whereas the transition (\ref{tr:4}) performs the $+$-contraction.
In general, the focused expression of the machine is marked with the angular brackets with a current direction tag, and the context is reconstructed around the focused expression. 
For example, a configuration $\langle {\const {12}}, {{\cons {{({\const 3} \oplus {\hole})}} {\nil}}}\rangle_\ctrian$ is represented by ${3} \oplus {{\cut {\const {12}}_\ctrian}}$.
An analogous run for a more complex abstract machine with an explanation of each step is given in Table 2. of \cite{IV}.

\begin{figure}[t]
\caption{Example run of an abstract machine in refocusing notation}
\label{fig:refocusing}
\vspace{-6mm}
\begin{align*}
0:&& {\cut {{{({\const 1} \oplus {\const 2})}} \oplus {{({\const 4} \oplus {\const 8})}}}_\etrian} &\stackrel{(\text{\ref{tr:1}})}{\to}\\[-0.5ex]
1:&& {{{\cut {{{\const 1} \oplus {\const 2}}}_\etrian}} \oplus {{({\const 4} \oplus {\const 8})}}} &\stackrel{(\text{\ref{tr:1}})}{\to}\\[-0.5ex]
2:&& {{{({{\cut {\const 1}_\etrian}} \oplus {\const 2})}} \oplus {{({\const 4} \oplus {\const 8})}}} &\stackrel{(\text{\ref{tr:2}})}{\to}\\[-0.5ex]
3:&& {{{({{\cut {\const 1}_\ctrian}} \oplus {\const 2})}} \oplus {{({\const 4} \oplus {\const 8})}}} &\stackrel{(\text{\ref{tr:3}})}{\to}\\[-0.5ex]
4:&& {{{({\const 1} \oplus {{\cut {\const 2}_\etrian}})}} \oplus {{({\const 4} \oplus {\const 8})}}} &\stackrel{(\text{\ref{tr:2}})}{\to}\\[-0.5ex]
5:&& {{{({\const 1} \oplus {{\cut {\const 2}_\ctrian}})}} \oplus {{({\const 4} \oplus {\const 8})}}} &\stackrel{(\text{\ref{tr:4}})}{\to}\\[-0.5ex]
6:&& {{{\cut {3}_\ctrian}} \oplus {{({\const 4} \oplus {\const 8})}}} &\stackrel{(\text{\ref{tr:3}})}{\to}\\[-0.5ex]
7:&& {{3} \oplus {{\cut {{{\const 4} \oplus {\const 8}}}_\etrian}}} &\stackrel{(\text{\ref{tr:1}})}{\to}\\[-0.5ex]
8:&& {{3} \oplus {{({{\cut {\const 4}_\etrian}} \oplus {\const 8})}}} &\stackrel{(\text{\ref{tr:2}})}{\to}\\[-0.5ex]
9:&& {{3} \oplus {{({{\cut {\const 4}_\ctrian}} \oplus {\const 8})}}} &\stackrel{(\text{\ref{tr:3}})}{\to}\\[-0.5ex]
10:&& {{3} \oplus {{({\const 4} \oplus {{\cut {\const 8}_\etrian}})}}} &\stackrel{(\text{\ref{tr:2}})}{\to}\\[-0.5ex]
11:&& {{3} \oplus {{({\const 4} \oplus {{\cut {\const 8}_\ctrian}})}}} &\stackrel{(\text{\ref{tr:4}})}{\to}\\[-0.5ex]
12:&& {{3} \oplus {{\cut {\const {12}}_\ctrian}}} &\stackrel{(\text{\ref{tr:4}})}{\to}\\[-0.5ex]
13:&& {\cut {\const {15}}_\ctrian} &\stackrel{\phantom{(\text{\ref{tr:1}})}}{\nrightarrow}
\end{align*}
\vspace{-6mm}
\end{figure}

The decoding allows us to observe how a transition affects the evaluated expression. Rules (\ref{tr:1}), (\ref{tr:2}), and (\ref{tr:3}) leave the expression unchanged, so they are called overhead transitions. Since all stacks decode to a context derivable from the $L$ nonterminal, the rule (\ref{tr:4}) performs the $+$-contraction in $L$-context. Therefore the machine performs the left-to-right strategy $\red L +$ for additive expressions. These observations can be formalized as the following lemmas.

\begin{lemma} Transitions $(\text{\ref{tr:1}})$, $(\text{\ref{tr:2}})$, and $(\text{\ref{tr:3}})$ are overhead transitions:$$\forall\; \iota \in \{\text{\ref{tr:1}}, \text{\ref{tr:2}}, \text{\ref{tr:3}} \}.\;\; k \stackrel{(\iota)}{\to} k'
\;\;\implies\;\;
 \underline{\underline{k}} =  \underline{\underline{k'}}$$
\end{lemma}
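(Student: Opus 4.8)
The plan is to reduce each of the three transitions to a short computation on the decoding functions, after first isolating the one structural fact that makes all of them work. That fact is the \emph{compatibility of plugging with context composition}: for all contexts $G_1, G_2$ and every expression $e$,
$$\plug{\plug{G_1}{G_2}}{e} = \plug{G_1}{\plug{G_2}{e}}.$$
I would prove this first, by structural induction on $G_1$ following the grammar $G ::= \hole \alt G \oplus e \alt e \oplus G$: the base case $G_1 = \hole$ is immediate from $\plug{\hole}{e_0} = e_0$, and the two inductive cases unfold one layer of $\plug{\cdot}{\cdot}$ on each side and apply the induction hypothesis. This is the routine part, but it is the load-bearing lemma.

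Next I would record the effect of the double underline on an arbitrary configuration. Unfolding the configuration decoding and then the pair decoding yields $\underline{\underline{\econf e S}} = \plug{\underline S}{e}$ and $\underline{\underline{\cconf{\const n} S}} = \plug{\underline S}{\const n}$. Hence proving the lemma amounts to checking, for each of the three transitions, that the context $\underline S$ plugged with the focused expression is preserved.

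Then I would treat the three cases. Transition $(\ref{tr:2})$ is trivial: it changes only the direction tag, so both sides decode to $\plug{\underline S}{\const n}$. For $(\ref{tr:1})$, the left side decodes to $\plug{\underline S}{e_1 \oplus e_2}$, while the right side decodes to $\plug{\underline{\cons{\hole \oplus e_2} S}}{e_1} = \plug{\plug{\underline S}{\hole \oplus e_2}}{e_1}$; the composition lemma rewrites this to $\plug{\underline S}{\plug{\hole \oplus e_2}{e_1}} = \plug{\underline S}{e_1 \oplus e_2}$, matching the left side. Case $(\ref{tr:3})$ is analogous: both sides decode, via the composition lemma, to $\plug{\underline S}{\const n \oplus e}$.

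The only mild obstacle is a representational one: the stack decoding plugs a \emph{context} (such as $\hole \oplus e_2$) into another context $\underline S$, so the plug function and the composition lemma must be read as operating on contexts, not merely on expressions --- equivalently, plugging is substitution for $\hole$, and the lemma is its associativity. Once that reading is fixed, every step above is a direct unfolding, and no case invokes the contraction rule $(\ref{tr:4})$, which is precisely why these three transitions are overhead.
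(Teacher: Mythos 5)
Your proof is correct. The paper states this lemma without proof (it appears only as an illustrative observation in the introduction), so there is no official argument to compare against; your route --- reducing each transition to the associativity of plugging, $\plug{\plug{G_1}{G_2}}{e} = \plug{G_1}{\plug{G_2}{e}}$, proved by induction on $G_1$, and noting that the stack decoding forces $\plug{\argfill}{\argfill}$ to be read as substitution for $\hole$ on contexts as well as expressions --- is the standard one and all three case computations check out.
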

\begin{lemma} Transition $(\text{\ref{tr:4}})$ performs $\red L +$ reduction: $
k \stackrel{(\text{\ref{tr:4}})}{\to} k'
\;\;\implies\;\;
 \underline{\underline{k}} \;\red L +\;  \underline{\underline{k'}}$.
\end{lemma}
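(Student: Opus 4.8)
The plan is to unfold the single transition $(\text{\ref{tr:4}})$, compute the double underlining of both sides explicitly, and then recognize the result as an instance of the contextual closure that defines $\red L +$. Writing the redex configuration as $k = \cconf{\const{n_2}}{\cons{\const{n_1} \oplus \hole}{S}}$ and the contractum configuration as $k' = \cconf{\const{n_1+n_2}}{S}$, I would first apply the configuration-decoding clause to obtain $\underline{k} = (\const{n_2}, \underline{\cons{\const{n_1} \oplus \hole}{S}})$ and $\underline{k'} = (\const{n_1+n_2}, \underline{S})$, and then the stack-decoding clause for the frame $\const{n_1} \oplus \hole$ to rewrite $\underline{\cons{\const{n_1} \oplus \hole}{S}} = \plug{\underline{S}}{\const{n_1} \oplus \hole}$. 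The second underlining then gives $\underline{\underline{k}} = \plug{(\plug{\underline{S}}{\const{n_1} \oplus \hole})}{\const{n_2}}$ and $\underline{\underline{k'}} = \plug{\underline{S}}{\const{n_1+n_2}}$.

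The first auxiliary fact I would establish is a \emph{compositionality of plugging}: $\plug{(\plug{G}{G'})}{e} = \plug{G}{\plug{G'}{e}}$ for all contexts $G, G'$ and expressions $e$, proved by a routine structural induction on $G$ over the grammar $\mathcal G$. Instantiating it with $G = \underline{S}$, $G' = \const{n_1} \oplus \hole$, and $e = \const{n_2}$, and unfolding $\plug{(\const{n_1} \oplus \hole)}{\const{n_2}} = \const{n_1} \oplus \const{n_2}$, collapses the left-hand side to $\underline{\underline{k}} = \plug{\underline{S}}{\const{n_1} \oplus \const{n_2}}$. Thus both decoded expressions share the common surrounding context $\underline{S}$, differing only at the plugged subexpression: $\const{n_1} \oplus \const{n_2}$ versus $\const{n_1+n_2}$.

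The conceptual heart of the argument is to certify that $\underline{S}$ is a genuine left-to-right context, i.e.\ derivable from the nonterminal $L$. I would prove, by induction on the stack $S$, that $\underline{S}$ is always an $L$-context. The base case $\underline{\nil} = \hole$ is immediate. For the inductive step I need a \emph{closure lemma}: plugging either frame, $\hole \oplus e$ or $\const n \oplus \hole$, into the hole of an $L$-context again yields an $L$-context; this follows by induction on the derivation of the $L$-context, the two frames corresponding exactly to the two recursive productions $L \oplus e$ and $\const n \oplus L$. Note that the machine only ever pushes the frame $\const n \oplus \hole$ after its left operand has been reduced to a constant (via transition $(\text{\ref{tr:3}})$), which is precisely why the constant-on-the-left restriction in the production $\const n \oplus L$ is respected.

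With these pieces in place the conclusion is immediate: $\const{n_1} \oplus \const{n_2} \contr + \const{n_1+n_2}$ holds by the defining clause of the $+$-contraction, and since $\underline{S}$ is an $L$-context, the contextual-closure rule for $\red L +$ applies and yields $\plug{\underline{S}}{\const{n_1} \oplus \const{n_2}} \;\red L +\; \plug{\underline{S}}{\const{n_1+n_2}}$, that is $\underline{\underline{k}} \;\red L +\; \underline{\underline{k'}}$. The only real work lies in the two supporting inductions; I expect the \emph{stack-to-$L$-context lemma} to be the main obstacle, not because it is deep but because it is the step that actually pins the machine down to the deterministic left-to-right strategy rather than the nondeterministic $\red {} +$, and getting the grammar's top-recursion to line up with plugging at the (bottom) hole requires the closure lemma to be phrased just right.
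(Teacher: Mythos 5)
Your proof is correct and follows essentially the same route as the paper, which states the lemma without a detailed proof and justifies it only by the observation that all stacks decode to $L$-contexts so that rule~(\ref{tr:4}) performs the $+$-contraction in an $L$-context. Your elaboration --- unfolding the decoding, the plug-compositionality fact, the stack-to-$L$-context induction with its frame-closure lemma, and the final appeal to the contextual-closure rule --- is exactly the formalization the paper's informal argument calls for.
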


Analogously, an abstract machine for right-to-left strategy $\red R +$ can be defined as in Figure~\ref{fig:rmachine}.
The first difference can be observed in the grammar of frames that corresponds to $R$-contexts.
A~difference is also in the first transition rule that focuses on the subexpression $e_2$ first.

\begin{figure}[h]
\caption{Right-to-left abstract machine for additive expressions}
\label{fig:rmachine}
\vspace{-6mm}
\begin{alignat*}{3}%
\textit{\textrm{Expressions}} \ni && e &:= \const n \alt e_1 \oplus e_2\\
\textit{\textrm{Frames}} \ni &\;& F &::= e \oplus \hole \alt \hole \oplus \const n\\
\textit{\textrm{Stacks}} \ni && S &::= \nil \alt \cons F S\\
\textit{\textrm{Configurations}} \ni && k &::= \econf e S \alt \cconf {\const n} S\\
\textit{\textrm{Transitions:}} \phantom{\ni} &&
\end{alignat*}%
\vspace{-12mm}
\begin{align}
e &\mapsto \econf e \nil\nonumber\\
\econf {e_1 \oplus e_2} S &\to
\econf {e_2} {\cons {e_1 \oplus \hole} S} \tag{$\rightstra$}\\
\econf {\const n}  S &\to
\cconf {\const n}  S \tag{$\,\uparrow\,$}\\
\cconf {\const n}  {\cons {e \oplus \hole} S} &\to
\econf {e}  {\cons {\hole \oplus \const n} S} \tag{$\leftstra$}\\
\cconf {\const {n_1}}  {\cons {\hole \oplus \const {n_2}} S} &\to
\cconf {\const {n_1 + n_2}} S \tag{$+$}\\
\cconf {\const n} \nil &\mapsto {\const n}\nonumber
\end{align}
\end{figure}

\newpage

\subsection{Abstract Machines for Lambda Calculus}
\label{sec:lambdamachines}
The first abstract machine for the lambda calculus is Landin's SECD abstract machine reproduced in Figure~\ref{fig:SECD} \cite{Landin64,DBLP:conf/ppdp/AgerBDM03}.
Transition names come from the original paper.
It is implemented in the software artefact accompanying Chapter IV titled \href{https://zenodo.org/record/6606302}{Abstract Machines Workshop}. The SECD machine performs the right-to-left call by value, that is $\stackrel{F}\to_{\beta_\lambda}$ reduction.

Configuration of the SECD machine, according to the name, consists of a stack, an environment, the code controlling the computation, and a dump.
A stack is just a list of evaluated terms, while an environment is a dictionary whose keys are identifiers, and whose values are evaluated terms.
The code component is a list of terms and application instructions.
The dump captures a complete configuration consisting of a stack, an environment, a code, and a next dump.

\begin{figure}[h!]
\caption{The pure part of the SECD abstract machine}
\label{fig:SECD}\vspace{-6mm}
\begin{alignat*}{3}%
\textit{\textrm{Terms}} \ni && t &:= x \alt \tapp {t_1} {t_2} \alt \tlam x t\\
\textit{\textrm{Control}} \ni && C &::= \nil \alt \cons t C \alt \cons {\text{ap}} C \\
\textit{\textrm{Values}} \ni && v &:= (\tlam x t, E)\\
\textit{\textrm{Environments}} \ni &\;& E &::= \nil \alt \cons {(x, v)}E\\
\textit{\textrm{Stacks}} \ni && S &::= \nil \alt \cons v S\\
\textit{\textrm{Dumps}} \ni && D &::= \nil \alt \cons {(S, E, C)} D\\
\textit{\textrm{Configurations}} \ni && k &::= \conf {S, E} {C, D}\\
\textit{\textrm{Transitions:}} \phantom{\ni} &&
\end{alignat*}%
\vspace{-12mm}
\begin{align}
t &\mapsto \conf {\nil, \nil} {\cons t \nil, \nil}\nonumber\\
\conf {S, E} {\cons x C, D} &\to
\conf {\cons {E(x)} S, E} {C, D} \tag{2a}  \label{tr:2a}\\
\conf {S, E} {\cons {(\tlam x t)} C, D} &\to
\conf {\cons {(\tlam x t, E)} S, E} {C, D} \tag{2b}  \label{tr:2b}\\
\conf {S, E} {\cons {\tapp {t_1} {t_2}} C, D} &\to
\conf {S, E} {\cons {t_2} {\cons {t_1} {\cons {\text{ap}} C}}, D} \tag{2d}\\
\conf {\cons {(\tlam x t, E')} {\cons v S}, E} {\cons {\text{ap}} C, D} &\to
\conf {\nil, \cons {(x, v)} E'} {\cons t \nil, \cons {(S, E, C)} D}\tag{2c1} \label{tr:2c1}\\
\conf {\cons v \_, \_} {\nil, \cons {(S, E, C)} D} &\to
\conf {\cons v {S}, E} {C, D} \tag{1}\\
\conf {\cons v \nil, \nil, \nil} \nil &\mapsto v\nonumber
\end{align}
\end{figure}

The only non-overhead transition rule is (\ref{tr:2c1}) which performs $\beta_\lambda$-contraction.
However, the substitution of the argument $v$ for parameter $x$ in the body $t$ is delayed by adding an entry $(x, v)$ to the environment $E'$ enclosing the abstraction $\tlam x t$.
The current value of substituted variables is retrieved by the transition rule (\ref{tr:2a}).
Thus, already processed terms should be always decoded in the appropriate environments.
Therefore, in the rule (\ref{tr:2b}), evaluated terms are paired with their environments that enclose their substituted variables.
As pairs, they constitute \emph{closures} that are used in other environment-based abstract machines.
Since SECD is the first abstract machine for the lambda calculus, its design is more complex and its decoding is more involved than of the next two machines below.

The dump component of the SECD machine could be simplified out \cite{DBLP:conf/ifip2/FelleisenF87,DBLP:conf/ifl/Danvy04,DBLP:conf/icfp/AccattoliBM14}. The simplified machine would resemble a right-to-left variant of the CEK machine.
The CEK machine \cite{DBLP:conf/ifip2/FelleisenF87} for the pure lambda calculus, performing left-to-right call-by-value reduction $\stackrel{E}\to_{\beta_\lambda}$, is presented in Figure~\ref{fig:CEK}.
The numbering of the rules coincides with the numbering of the rules from Table 1 of \cite{DBLP:conf/ifip2/FelleisenF87}.
The rule (\ref{tr:CEK:5}) is responsible for $\beta_\lambda$-contraction,
and stacks are straightforwardly decoded to $E$-contexts of Section~\ref{sec:strategies}.

\begin{figure}[h!]
\caption{The pure part of the CEK abstract machine}
\label{fig:CEK}
\vspace{-6mm}
\begin{alignat*}{3}%
\textit{\textrm{Terms}} \ni && t &:= x \alt \tapp {t_1} {t_2} \alt \tlam x t\\
\textit{\textrm{Values}} \ni && v &:= (\tlam x t, E)\\\textit{\textrm{Environments}} \ni &\;& E &::= \nil \alt \cons {(x, v)} E\\\textit{\textrm{Frames}} \ni &\;& F &::= \tapp \hole {(t, E)} \alt \tapp v \hole\\
\textit{\textrm{Stacks}} \ni && S &::= \nil \alt \cons F S\\
\textit{\textrm{Configurations}} \ni && k &::= \econf {(t, E)} S \alt \cconf v S\\
\textit{\textrm{Transitions:}} \phantom{\ni} &&
\end{alignat*}%
\vspace{-12mm}
\begin{align}
\setcounter{equation}{0}
t &\mapsto \econf {(t, \nil)} {\nil}\nonumber\\
\econf {(x, E)} S &\to
\cconf {E(x)} S \label{tr:CEK:1}\\ 
\econf {(\tlam x t, E)} S &\to
\cconf {(\tlam x t, E)} S\label{tr:CEK:2}\\
\econf {(\tapp {t_1} {t_2}, E)} S &\to
\econf {(t_1, E)} {\cons {\tapp \hole {(t_2, E)}} S}\label{tr:CEK:3}\\
\cconf v {\cons {\tapp \hole {(t_2, E)}} S} &\to
\econf {(t_2, E)} {\cons {\tapp v \hole} S}\label{tr:CEK:4}\\
\cconf v {\cons {\tapp {(\tlam x t, E)} \hole} S} &\to
\econf {(t, \cons {(x, v)} E)} S\label{tr:CEK:5}\\
\cconf v {\nil} &\mapsto v\nonumber
\end{align}
\end{figure}

The canonical machine performing call by name, that is $\stackrel{Q}\to_{\beta}$ reduction, is the well-known Krivine machine \cite{Krivine85} presented in Figure~\ref{fig:Krivine}. It operates on lambda expressions in de Bruijn indices representation.
The appropriate closure substituted for a variable is found by rules (\ref{tr:K:3}) and (\ref{tr:K:4}).
The unrestricted $\beta$-contraction is performed by the rule (\ref{tr:K:2}), since the rule (\ref{tr:K:1}) creates closures with an unevaluated term.

\begin{figure}[h!]
\caption{Krivine machine}
\label{fig:Krivine}
\vspace{-6mm}
\begin{alignat*}{3}%
\textit{\textrm{Terms}} \ni && t &:= n \alt \tapp {t_1} {t_2} \alt \lambda t\\
\textit{\textrm{Closures}} \ni && c &:= (t, E)\\\textit{\textrm{Environments}} \ni &\;& E &::= \nil \alt \cons c E\\
\textit{\textrm{Stacks}} \ni && S &::= \nil \alt \cons {\tapp \hole c} S\\
\textit{\textrm{Configurations}} \ni && k &::= \conf {c} S\\
\textit{\textrm{Transitions:}} \phantom{\ni} &&
\end{alignat*}%
\vspace{-12mm}
\begin{align}
\setcounter{equation}{0}
t &\mapsto \conf {(t, \nil)} {\nil}\nonumber\\
\conf {(\tapp {t_1} {t_2}, E)} S &\to
\conf {(t_1, E)} {\cons {\tapp \hole {(t_2, E)}} S}\label{tr:K:1}\\
\conf {(\lambda t, E)} {\cons {\tapp \hole c} S} &\to
\conf {(t, \cons c E)} S\label{tr:K:2}\\
\conf {(0, \cons c E)} S &\to
\conf {c} S\label{tr:K:3}\\
\conf {(n + 1, \cons c E)} S &\to
\conf {(n, E)} S\label{tr:K:4}\\
\conf c {\nil} &\mapsto c\nonumber
\end{align}
\end{figure}

Crégut extended the Krivine machine to a strong abstract machine called KN \cite{DBLP:journals/lisp/Cregut07}, recreated in Figure~\ref{fig:KN}.
The numbering of the transition rules coincides with \cite{DBLP:journals/lisp/Cregut07} where (7) is used to underline the shape of terminal configurations.
In case there is no argument on a stack for an abstraction, the rule (\ref{tr:KN:3}) is executed: an \emph{abstract variable} $V(n + 1)$ is added to the environment $E$ as a~prepared argument, and a frame $\lambda \hole$ is placed on the stack.
The number $n$, being the third element of the configuration, remembers the number of frames $\lambda \hole$ on the stack and therefore it is incremented by rule (\ref{tr:KN:3}) and decremented by rule (\ref{tr:KN:9}).
Thanks to that, an appropriate de Bruijn index can be easily computed by the rule (\ref{tr:KN:6}).

\begin{figure}[h!]
\caption{A variant of KN abstract machine}
\label{fig:KN}
\vspace{-6mm}
\begin{alignat*}{3}%
\textit{\textrm{Terms}} \ni && t &:= n \alt \tapp {t_1} {t_2} \alt \lambda t\\
\textit{\textrm{Closures}} \ni && c &:= (t, E) \alt V(n)\\\textit{\textrm{Environments}} \ni &\;& E &::= \nil \alt \cons c E\\
\textit{\textrm{Frames}} \ni &\;& F &::= \tapp \hole c \alt \tapp t \hole \alt \lambda \hole\\
\textit{\textrm{Stacks}} \ni && S &::= \nil \alt \cons F S\\
\textit{\textrm{Configurations}} \ni && k &::= \econf {c} {S, n} \alt \cconf t {S, n}\\
\textit{\textrm{Transitions:}} \phantom{\ni} &&
\end{alignat*}%
\vspace{-12mm}
\begin{align}
\setcounter{equation}{0}
t &\mapsto \econf {(t, \nil)} {\nil, 0}\nonumber\\
\econf {(\tapp {t_1} {t_2}, E)} {S, n} &\to
\econf {(t_1, E)} {\cons {\tapp \hole {(t_2, E)}} S, n} \label{tr:KN:1}\\
\econf {(\lambda t, E)} {\cons {\tapp \hole c} S, n} &\to
\econf {(t, \hspace{13mm} \cons c E)} {\phantom{\cons {\lambda \hole} {\,\,}} S, n} \label{tr:KN:2}\\
\econf {(\lambda t, E)} {\cancel {\cons {\tapp \hole c} {}} \;S, n} &\to
\econf {(t, \cons {V(n + 1)} E)} {\cons {\lambda \hole} S, n + 1} \label{tr:KN:3}\\
\econf {(0, \cons c E)} {S, n} &\to
\econf {c} {S, n} \label{tr:KN:4}\\
\econf {(n' + 1, \cons c E)} {S, n} &\to
\econf {(n', E)} {S, n} \label{tr:KN:5}\\
\econf {V(n_0)} {S, n} &\to
\cconf {n - n_0} {S, n} \label{tr:KN:6}\\
\setcounter{equation}{7}
\cconf t {\cons {\tapp \hole c} S, n} &\to
\econf c {\cons {\tapp t \hole} S, n} \label{tr:KN:8}\\
\cconf t {\cons {\lambda \hole} S, n} &\to
\cconf {\lambda t} {S, n - 1} \label{tr:KN:9}\\
\cconf {t_2} {\cons {\tapp {t_1} \hole} S, n} &\to
\cconf {\tapp {t_1} {t_2}} {S, n} \label{tr:KN:10}\\
\cconf t {\nil, 0} &\mapsto t\nonumber
\end{align}
\end{figure}

Crégut's KN machine performs $\stackrel N \to_\beta$ reduction.
However, the argument that KN implements the strong call-by-name strategy is not immediate \cite{DBLP:journals/jfp/Garcia-PerezN19}. One of the possible proofs is to give a weakly bisimilar abstract machine carrying extra information, called a ghost abstract machine in \cite{IV}. As an example, a version of KN with explicit shape invariant is given in Figure~\ref{fig:KNi}.
The grammar of stacks with nonterminals $\alpha$ and $\nu$ guards the property that $\lambda\hole$ is never placed directly on $\tapp \hole c$ because it would create an unreduced function-argument pair on the stack.
Such $\alpha$-stacks are decoded into exactly $N$-contexts of Section~\ref{sec:strategies}.
The grammar also uses grammars of terms explicitly restricted to normal and neutral terms.
The explicit coercion $\lceil \argfill \rceil$ from neutral to normal terms, including the $\lceil \hole \rceil$ frame, is erased by the decoding.
However, management of this explicit shape invariant increases the number of configuration modes to three and adds a transition rule (\ref{tr:KNi:8a}).
Yet, the side condition of transition rule (\ref{tr:KN:3}) from Figure \ref{fig:KN} is expressed explicitly by the presence of the $\lceil \hole \rceil$ frame.

\begin{figure}[h!]
\caption{Ghost abstract machine with explicit shape invariant for KN}
\label{fig:KNi}
\vspace{-6mm}
\begin{alignat*}{3}%
\textit{\textrm{Terms}} \ni && t &:= m \alt \tapp {t_1} {t_2} \alt \lambda t\\
\textit{\textrm{Normal Terms}} \ni && n &:= \lambda t \alt \lceil a \rceil\\
\textit{\textrm{Neutral Terms}} \ni && a &:= m \alt \tapp a n\\
\textit{\textrm{Closures}} \ni && c &:= (t, E) \alt V(n)\\\textit{\textrm{Environments}} \ni &\;& E &::= \nil \alt \cons c E\\
\mathit{Applicative\;Stacks}    \ni && \alpha &::=
         \cons {\tapp \hole c} \alpha
	\alt \cons {\lceil \hole \rceil} \nu \\
\mathit{Non\text{-}applicative\;Stacks} \ni && \nu    &::=
	     \nil
	\alt \cons {\lambda \hole} \nu
	\alt \cons {\tapp a \hole} \alpha \\
\mathit{Configurations} \ni && k   &::=
  \econf c \alpha
  \alt \cconf a \alpha
  \alt \nconf n \nu\\
\textit{\textrm{Transitions:}} \phantom{\ni} &&
\end{alignat*}%
\vspace{-12mm}
\begin{align}
\setcounter{equation}{0}
t &\mapsto \econf {(t, \nil)} {\cons {\lceil \hole \rceil} \nil, 0}\nonumber\\
\econf {(\tapp {t_1} {t_2}, E)} {\alpha, m} &\to
\econf {(t_1, E)} {\cons {\tapp \hole {(t_2, E)}} \alpha, m}\\
\econf {(\lambda t, E)} {\cons {\tapp \hole c} \alpha, m} &\to
\econf {(t, \hspace{13mm} \cons c E)} {\phantom{\cons {\lambda \hole} {\,\,}} \alpha, m}\\
\econf {(\lambda t, E)} {\cons {\lceil \hole \rceil} \nu, m} &\to
\econf {(t, \cons {V(m + 1)} E)} {\cons {\lambda \hole} \nu, m + 1}\\
\econf {(0, \cons c E)} {\alpha, m} &\to
\econf {c} {\alpha, m}\\
\econf {(m' + 1, \cons c E)} {\alpha, m} &\to
\econf {(m', E)} {\alpha, m}\\
\econf {V(m_0)} {\alpha, m} &\to
\cconf {m - m_0} {\alpha, m}\\
\setcounter{equation}{7}
\cconf a {\cons {\tapp \hole c} \alpha, m} &\to
\econf c {\cons {\lceil \hole \rceil} {\cons {\tapp a \hole} \alpha}, m}\\
\cconf a {\cons {\lceil \hole \rceil} \nu, m} &\to
\nconf {\lceil a \rceil}  {\nu, m} \tag{8a} \label{tr:KNi:8a}\\
\nconf n {\cons {\lambda \hole} \nu, m} &\to
\nconf {\lambda n} {\nu, m - 1}\\
\nconf {n} {\cons {\tapp a \hole} \alpha, m} &\to
\cconf {\tapp a n} {\alpha, m}\\
\cconf n {\nil, 0} &\mapsto n\nonumber
\end{align}
\end{figure}

It turns out that KN may need exponentially many steps with respect to simulated reduction steps.
As Accattoli and Dal Lago pinpointed with the name \emph{size explosion problem} \cite{DBLP:journals/corr/AccattoliL16},
a lambda expression can grow exponentially big with respect to the number of executed $\beta$-steps.
Since the KN employs no means of sharing, exponentially many constructors have to be introduced then step by step.
In other words, in general, KN suffers from exponential overhead. The first abstract machine for strong call by name with polynomial overhead was Accattoli's Useful MAM \cite{DBLP:conf/wollic/Accattoli16}, reproduced in Figure~\ref{fig:UMAM}.
It represents contexts of focused expressions by structures called here stacks and dumps in analogy to the SECD machine.
It also uses a global store $\sigma$ instead of local environments.
Variables are intended to be implemented as memory locations, so the store is represented by the memory state.
Due to that, it needs to maintain an invariant that all variables have distinct names. It can be achieved by generating fresh variable names in the initial configuration and transition rules (\ref{tr:UMAM:er}) and (\ref{tr:UMAM:ea}).
Useful MAM implements \emph{useful sharing} \cite{DBLP:journals/corr/AccattoliL16} by suppressing substitutions that do not create new reducible expressions as in the transition rule (\ref{tr:UMAM:c3}).
The $\beta$-contraction is realized by rules (\ref{tr:UMAM:m1}) and (\ref{tr:UMAM:m2}).
The rule (\ref{tr:UMAM:m1}) eagerly executes substitution, which is of linear complexity, in order to avoid creating \emph{variable chains} in the store.
The rule (\ref{tr:UMAM:m2}) computes an appropriate label $l$ with an auxiliary abstract machine with the same data structures and \texttt{c}-transitions.
The label $red$ means that the term $t_2$ contains a redex and its substitution is always useful, $neu$ means that $t_2$ is a neutral term and its substitution would be always useless, and $abs$ means that $t_2$ is a normal abstraction and its substitution is useful if and only if it will be applied to an argument.

\begin{figure}[h!]
\caption{Useful Milner abstract machine}
\label{fig:UMAM}
\vspace{-6mm}
\begin{alignat*}{3}%
\textit{\textrm{Terms}} \ni && t &:= x \alt \tapp {t_1} {t_2} \alt \tlam x t\\
\textit{\textrm{Stacks}} \ni && S &::= \nil \alt \cons {\tapp \hole t} S\\
\textit{\textrm{Dump Frames}} \ni &\;& F &::= \tlam x \hole \alt (\tapp t \hole) S\\
\textit{\textrm{Dumps}} \ni && D &::= \nil \alt \cons F D\\
\textit{\textrm{Labels}} \ni && l &:= abs \alt neu \alt red\\\textit{\textrm{Stores}} \ni &\;& \sigma &::= \nil \alt \cons {(x, (t, l))}  \sigma\\
\textit{\textrm{Configurations}} \ni && k &::= \econf {t} {S, D, \sigma} \alt \cconf t {S, D, \sigma}\\
\textit{\textrm{Transitions:}} \phantom{\ni} &&
\end{alignat*}%
\vspace{-12mm}
\begin{align}
\setcounter{equation}{0}
t &\mapsto \econf {t^\alpha} {\nil, \nil, \nil}\nonumber\\
\econf {\tapp {t_1} {t_2}} {S, D, \sigma} &\to
\econf {t_1} {\cons {\tapp \hole {t_2}} S, D, \sigma} \tag{$\etrian\texttt{c}_1$}\\
\econf {\tlam {x_1} t} {\cons {\tapp \hole {x_2}} S, D, \sigma} &\to
\econf {\subst {x_1} {x_2} t} {S, D, \sigma} \tag{$\texttt{m}_1$} \label{tr:UMAM:m1}\\
\econf {\tlam {x} {t_1}} {\cons {\tapp \hole {t_2}} S, D, \sigma} &\to
\econf {t_1} {S, D, \cons {(x, (t_2, l))} \sigma} \tag{$\texttt{m}_2$}\label{tr:UMAM:m2} &&: \text{appropriate $l$}\\
\econf {\tlam {x} t} {\nil, D, \sigma} &\to
\econf {t} {\nil, \cons {\tlam x \hole} D, \sigma} \tag{$\etrian\texttt{c}_2$}\\
\econf {x} {S, D, \sigma} &\to
\econf {t^\alpha} {S, D, \sigma} \tag{$\texttt{e}_{red}$} \label{tr:UMAM:er}&&: \sigma(x)= (t, red)\\
\econf {x} {\cons {\tapp \hole {t_2}} S, D, \sigma} &\to
\econf {t^\alpha} {\cons {\tapp \hole {t_2}}S, D, \sigma} \tag{$\texttt{e}_{abs}$} \label{tr:UMAM:ea}&&: \sigma(x)= (t, abs)\\
\econf {x} {S, D, \sigma} &\to
\cconf {x} {S, D, \sigma} \tag{$\etrian\texttt{c}_3$} \label{tr:UMAM:c3}&&: \text{otherwise}\\
\cconf {t_1} {\cons {\tapp \hole {t_2}} S, D, \sigma} &\to
\econf {t_2} {\nil, \cons {(\tapp {t_1} \hole) S} D, \sigma} \tag{$\ctrian\texttt{c}_6$}\\
\cconf {t} {\nil, \cons {\tlam x \hole} D, \sigma} &\to
\cconf {\tlam x t} {S, D, \sigma} \tag{$\ctrian\texttt{c}_4$}\\
\cconf {t_2} {\nil, \cons {(\tapp {t_1} \hole) S} D, \sigma} &\to
\cconf {\tapp {t_1} {t_2}} {S, D, \sigma} \tag{$\ctrian\texttt{c}_5$}\\
\cconf t {\nil, \nil, \sigma} &\mapsto (t, \sigma)\nonumber
\end{align}
\end{figure}

Accattoli, Condoluci, and Sacerdoti Coen proposed an abstract machine for strong call by value with polynomial overhead, independent from ours, and called it SCAM \cite{DBLP:conf/lics/AccattoliCC21}. It is reproduced for comparison in Figure~\ref{fig:SCAM} in a notation akin to ours. A~brief comparison of SCAM with RKLV of Chapter III is located at the end of that chapter.

\begin{figure}[h!]
\caption{Strong crumbling abstract machine}
\label{fig:SCAM}
\vspace{-6mm}
\begin{alignat*}{3}
\textit{\textrm{Bites}} \ni && b &::= x \alt \tapp x y \alt \tlam x t\\
\textit{\textrm{Environments}} \ni &\;& E &::= \hole \alt E \es x b\\
\hspace{-8mm}\textit{\textrm{Crumbles}} \ni && t &::= \plug E \star\\
\textit{\textrm{Frames}} \ni &\;& F &::= \hole \es x b \alt t \es x {\tlam y \hole}\\
\textit{\textrm{Stacks}} \ni && S &::= \nil \alt \cons F S\\
\textit{\textrm{Configurations}} \ni && k &::= \econf t S \alt \cconf t S\\
\textit{\textrm{Transitions:}} \phantom{\ni} &&
\end{alignat*}%
\vspace{-12mm}
\begin{align}
t &\mapsto \econf t \nil\nonumber\\
\econf {t\es x {\tapp y z}} S &\to
\econf {\plug {E} {t\es x b}{\subst w z {}} } S
&& : \text{$(*)$ $\wedge$ $S(z)$ is an abstr.} \tag{$\beta_v$} \label{tr:SCAM:bv}\\
\econf {t\es x {\tapp y z}} S &\to
\econf {\plug E {t\es x b}} {\cons {\hole \es w z} S}\hspace{-4.5mm}
&& : \text{$(*)$ $\wedge$ $S(z)$ not an abstr.} \tag{$\beta_i$} \label{tr:SCAM:bi}\\
\econf {t \es x y} S &\to \econf {\subst x y t} S
&& : x \neq \star \tag{$\mathsf{ren}$} \label{tr:SCAM:ren}\\
\econf {t \es x b} S &\to \econf t {\cons {\hole \es x b} S}
&& : \text{otherwise} \tag{$\mathsf{sea}_1$}\\
\econf \star S &\to \cconf \star S
\tag{$\mathsf{sea}_2$}\\
\cconf t {\cons {\hole \es x b} S} &\to \cconf {t \es x b} S
&& : \text{$b$ is\! not\! an\! abstraction} \tag{$\mathsf{sea}_3$}\\
\cconf t {\cons {\hole \es x  {\tlam y {t'}}} S} &\to \cconf t S
&& : x \notin \mathit{FV}(t) \tag{$\mathsf{gc}$}\\
\cconf t {\cons {\hole \es x {\tlam y {t'}}} S} &\to
\econf {t'} {\cons {t \es x {\tlam y \hole}} S}
&& : x \in \mathit{FV}(t) \tag{$\mathsf{sea}_5$}\\
\cconf {t'} {\cons {t \es x {\tlam y \hole}} S} &\to
\cconf {t \es x {\tlam y {t'}}} S
\tag{$\mathsf{sea}_4$}\\
\cconf t \nil &\mapsto t\nonumber
\end{align}

\end{figure}

The SCAM works on precompiled forms of lambda terms called \emph{crumbles} that resemble $A$-normal forms \cite{DBLP:journals/lisp/SabryF93} (later called administrative normal forms, for example in \cite{Appel98}).
They are represented by explicit substitutions $\argfill \es \argfill \argfill$ that are part of the syntax,
as opposed to substitution operation $\subst \argfill \argfill \argfill$ that is executed by rules (\ref{tr:SCAM:bv}) and (\ref{tr:SCAM:ren}).
The machine performs on-the-fly $\alpha$-renaming on $\beta$-transitions (\ref{tr:SCAM:bv}) and (\ref{tr:SCAM:bi}).
The side condition $(*)$ denotes $S(y)^\alpha = \tlam w {\plug E {\star \es \star b}}$ where $S(y)$ is an abstraction bound to the variable $y$ on the stack $S$.

\subsection{Normalization by Evaluation}

A term irreducible in the given strategy is called a normal form of this strategy. Thus the process of computing it is called normalization. A possible means of normalization is a simulation of reduction steps via an abstract machine, as presented in the previous section.

Another approach is to assign meaning to expressions in a way consistent with reduction rules.
Intuitively, it means that if two expressions $e_1$ and $e_2$ have the same meaning $\llbracket e_1 \rrbracket = \llbracket e_2 \rrbracket$, then they are convertible (in case of additive expressions: $e_1 =_+ e_2$).
Reification $\lceil \argfill \rceil$ is an operation that can turn a meaning $m$ into an expression $\lceil m \rceil$ having this meaning $\llbracket \lceil m \rceil \rrbracket = m$.
Having that, in order to normalize an expression~$e$, we can determine its meaning $\llbracket e \rrbracket$, and reify it into an expression  $\lceil \llbracket e \rrbracket \rceil$ in normal form that is convertible to $e$ (for additive expressions: $e =_+ \lceil \llbracket e \rrbracket \rceil$).
The meaning of additive expressions is defined compositionally as follows:
\begin{alignat*}{1}
\llbracket \argfill \rrbracket &: \mathcal E \to \mathbb N\\
\llbracket \const n \rrbracket &= n\\
\llbracket e_1 \oplus e_2 \rrbracket &= \llbracket e_1 \rrbracket + \llbracket e_2 \rrbracket
\end{alignat*}

Formally, it can be implemented in the OCaml language as in Listing~\ref{lst:normalizer}.\\
\begin{lstlisting}[caption={Normalizer for additive expressions},label={lst:normalizer}]
type expr  = Const of int | Plus of expr * expr
type value = int

let rec eval : expr -> value = function
  | Const n      -> n
  | Plus(e1, e2) -> (eval e1) + (eval e2)

let reify    (n : value) : expr = Const n
let normalize (e : expr) : expr = reify (eval e)
\end{lstlisting}

It can be checked that $\const {\llbracket(\const 1 \oplus \const 2) \oplus (\const 4 \oplus \const 8)\rrbracket} = \const {15}$: a metaexpression \texttt{normalize (Plus (Plus (Const 1) (Const 2)) (Plus (Const 4) (Const 8))))} computes a metavalue \texttt{Const 15}. 

\subsection{Functional Correspondence}

Olivier Danvy and his then doctoral students showed that such normalizers and abstract machines are two sides of the same coin \cite{DBLP:conf/ppdp/AgerBDM03}. They elaborated on how the two can be systematically transformed into each other, preserving their behaviour, with examples including SECD, CEK, and Krivine machine. The bridge between the two sides is called the \emph{functional correspondence}.

As an extra example, the transformation of the normalizer from Listing~\ref{lst:normalizer} is presented below. The first step is to translate the code to the continuation-passing style where the control flow is explicitized by passing a continuation as an extra argument of functions. The result of the translation of the function eval is presented in Listing~\ref{lst:cps}. Additionally, the left-to-right evaluation order is imposed, and the reify function is inlined and passed as a continuation in the normalize function where eval in the continuation-passing style is called.

\begin{lstlisting}[caption={Normalizer in continuation-passing style},label={lst:cps}]
let rec eval (t : expr) (k : value -> 'a) : 'a =
  match t with
  | Const n      -> k n
  | Plus(e1, e2) -> eval e1 (fun n1 ->
                    eval e2 (fun n2 ->
                    k (n1 + n2)))

let normalize (e : expr) : expr =
  eval e (fun n -> Const n)

\end{lstlisting}

The second step is to eliminate the higher-order aspect of function eval by defunctionalization. The type of continuation is defunctionalized into a datatype of the stack. The result is given in Listing~\ref{lst:defun}. Additionally, the definition of the stack is dissected into the definition of frames and stack as a list of frames.

\begin{lstlisting}[caption={Defunctionalized normalizer in continuation-passing style},label={lst:defun}]
type frame = PlusR of expr | PlusL of value
type stack = frame list

let rec eval (e : expr) (s : stack) : expr =
  match e with
  | Plus(e1, e2)  -> eval e1 (PlusR e2 :: s)
  | Const n       -> cont n s
and cont (n : value) (s : stack) : expr =
  match s with
  | PlusR e2 :: s -> eval e2 (PlusL n :: s)
  | PlusL n1 :: s -> cont (n1 + n) s
  | []            -> Const n

let normalize (e : expr) : expr = eval e []
\end{lstlisting}

Commonly, a code in the form of Listing~\ref{lst:defun} is regarded as an abstract machine from Figure~\ref{fig:machine}. The \texttt{eval t []} from the last line defines the initial configuration, and the last clause of pattern matching in the \texttt{cont} function defines how to retrieve the result from the terminal configuration.
The remaining four pattern matching clauses encode the four transitions of the machine, and names of mutually recursive functions \texttt{eval} and \texttt{cont}, with only tail calls, correspond to downward $\etrian$ and upward $\ctrian$ directions of the tag respectively.
However, the structure of abstract machines can be made even more explicit as in Listing~\ref{lst:machine}. Configurations form a datatype, \texttt{load} returns an initial configuration, \texttt{trans\_err} is the transition function implemented as a partial function from configurations to configurations, and \texttt{unload} manages the terminal configuration. The computation of normal form is directly expressed as an iteration of the finite transitions.

\begin{lstlisting}[caption={Abstract machine},label={lst:machine}]
type conf = E of expr * stack | C of value * stack

let load (e : expr) : conf = E(e, [])

let trans_err (c : conf) : conf =
  match c with
  | E(Plus(e1, e2),   s) -> E(e1, PlusR e2 :: s)
  | E(Const n,        s) -> C(n,              s)
  | C(n,  PlusR e  :: s) -> E(e,  PlusL  n :: s)
  | C(n2, PlusL n1 :: s) -> C(n1 + n2,        s)
  | C(n,             []) -> raise Stream.Failure

let unload : conf -> expr = function
  | C(n,             []) -> Const n
  | _                    -> assert false

let rec iter_trans (c : conf) : conf =
  match trans_err c with
  | exception Stream.Failure -> c
  | c'                       -> iter_trans c'

let normalize (e : expr) : expr =
  e |> load |> iter_trans |> unload
\end{lstlisting}

Normalization by evaluation is sometimes promoted as a “reduction-free” approach to normalization. However, thanks to Danvy’s bridge \cite{DBLP:journals/entcs/Danvy05}, it can be thought of as a reduction-hidden-at-the-metalevel approach. An abstract machine derived via the functional correspondence allows us to observe how the normalization by evaluation proceeds. The CPS translation makes the control flow, and maybe more importantly, the evaluation order explicit. In turn, the defunctionalization pushes the implementation of closures from the metalevel to the object level. Then decodings of the configurations enable tracing of the reduction sequence.

An obtained machine exhibits steps of, possibly higher-order, computation as iterated finite changes of first-order data. Such a representation constitutes quite clear implementation guidelines of the defined language. Moreover, the implementation is actually in trampolined style \cite{DBLP:conf/icfp/GanzFW99} as a single transition corresponds to one bounce. This means that a computation can be easily run for a given number of steps, resumed later, or interleaved with other computations. On the other hand, from a theoretical point of view, an abstract machine constitutes a operational semantics. It is more fine-grained than small-step operational semantics, or even than micro-step operational semantics as understood in \cite{DBLP:conf/icfp/AccattoliBM14}, so we can call it nano-step operational semantics. Together with a practical implementation, it has also a didactic value because learners can run an abstract computation step by step on their computers and see how the defined programming language works. Last but not least, the number of machine steps gives a natural cost model for time complexity, as mentioned earlier.

\subsection{Efficiency of Abstract Machines}

Chapters III and IV tackle problems of the efficiency of abstract machines.
The KNV machine of Chapter II suffers from exponential overhead because it may fall prey to computing the same partial results repeatedly.
It resembles the trouble with computing terms of the Fibonacci sequence directly from its recursive definition:

\begin{lstlisting}
let rec fib (n : int) : int =
  if n < 2 then n else fib (n - 1) + fib (n - 2)
\end{lstlisting}

A computer user can convince themselves empirically that the initial terms are computed immediately, but a query for \texttt{fib(60)} causes the computer to freeze for a~long while. The reason is that preceding sequence terms are computed over and over. The problem can be solved by the memoization technique:

\begin{lstlisting}
module IntDict = Map.Make(
  struct type t = int let compare = compare end)

let caches : int IntDict.t ref = ref IntDict.empty

let rec fib (n : int) : int =
  match IntDict.find_opt n !caches with
  | None   -> let y = if n < 2 then n else
                fib (n - 1) + fib (n - 2) in
              caches := IntDict.add n y !caches;
              y
  | Some y -> y
\end{lstlisting}

When a computation of any term is finished, the result is cached, and then retrieved from the cache any time it is needed again. Thanks to that, any term is computed at most once. The value of \texttt{fib(150)}, modulo OCaml’s integer overflow, is returned in the blink of an eye. The difference is truly astronomic because the number of computation steps directly from the recursive definition is greater than the number of nanoseconds since the Big Bang. If my computer had begun that computation at the very moment of the Big Bang, dinosaurs on Earth would have evolved before, non-avian dinosaurs would have become extinct, and my computer would be still computing now!

The introduction of memoization to normalizers of lambda terms entails the \emph{implicit sharing} of subterms. It leads to a counterintuitive proposition that exponentially big terms can be constructed in linear time. Since terms are represented as persistent data structures, when a computed subterm is needed again, it is enough to copy a memory reference to its root instead of copying every memory cell that represents it. Due to that, a logical size of the term can nearly double in constant time, while the physical representation grows at most by a constant amount of space. While it does not affect reasoning about time complexity, the fact that subterm sharing is managed by the metalanguage hampers the reasoning about space complexity. Therefore, a systematic explicitization of memory sharing is one of the subjects of still ongoing development.

As shown in Listing~\ref{lst:machine}, whole configurations of abstract machines
can be considered as persistent data structures. Thus, to bound the number of steps the machines can perform, potential functions à la Chris Okasaki were employed \cite{Okasaki:99}. As an example of how they are working, a potential function $\Phi$ for the abstract machine from Figure~\ref{fig:machine} is presented in Figure~\ref{fig:potential} with its properties expressed below.

\begin{figure}[h!!]
\caption{Potential function for the left-to-right abstract machine for additive expressions}
\label{fig:potential}
\vspace{-6mm}
\begin{alignat*}{3}%
&\Phi : \textit{Expressions} \to \mathbb{N}\\
&\Phi(e_1 \oplus e_2) = 3 + \Phi(e_1) + \Phi(e_2)\hspace{19mm}
\Phi(\const n) = 1\\
&\Phi : \textit{Stacks} \to \mathbb{N}\\
&\Phi(\nil) = 0 \hspace{12mm}
\Phi{(\cons {\hole \oplus e} S)}  = 2 + \Phi(e) + \Phi(S) \hspace{12mm}
\Phi{(\cons {\const n \oplus \hole} S)}  = 1 + \Phi(S)\\
&\Phi : \textit{Configurations} \to \mathbb{N}\\
&\Phi{(\econf e S)} = \Phi(e) + \Phi(S)\hspace{19mm}
\Phi{(\cconf {\const n} S)}   = \Phi(S)
\end{alignat*}%
\end{figure}

\begin{lemma}Every transition of the machine from Figure~\ref{fig:machine} decreases the potential:\\
$k \to k' \;\;\implies\;\;
 \Phi(k) >  \Phi(k')$.
\;\; More specifically:
$k \to k' \;\;\implies\;\;
 \Phi(k) = 1 + \Phi(k')$.
\end{lemma}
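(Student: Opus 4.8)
The plan is to proceed by a straightforward case analysis on the four transition rules, since the relation $\to$ is defined by exactly these four clauses. For each rule I would compute $\Phi$ of the left-hand configuration and of the right-hand configuration directly from the defining equations of $\Phi$ in Figure~\ref{fig:potential}, and verify that the two differ by exactly $1$. Because the stronger equation $\Phi(k) = 1 + \Phi(k')$ immediately implies the strict inequality $\Phi(k) > \Phi(k')$, it suffices to establish the equation in every case, which is why I would prove the sharper ``more specifically'' clause first and read off the weaker one.

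First, for transition $(\text{\ref{tr:1}})$ I would unfold $\Phi(\econf {e_1 \oplus e_2} S) = 3 + \Phi(e_1) + \Phi(e_2) + \Phi(S)$ and $\Phi(\econf {e_1} {\cons {\hole \oplus e_2} S}) = \Phi(e_1) + 2 + \Phi(e_2) + \Phi(S)$, so the difference is $3 - 2 = 1$. For transition $(\text{\ref{tr:2}})$ the left side is $\Phi(\const n) + \Phi(S) = 1 + \Phi(S)$ while the right side is $\Phi(S)$, so the drop is exactly the weight $\Phi(\const n) = 1$ lost when a constant in downward mode is recognized as a value. Transition $(\text{\ref{tr:3}})$ replaces a frame $\hole \oplus e$ of weight $2 + \Phi(e)$ by a frame $\const n \oplus \hole$ of weight $1$ while moving $e$ into focus, so again the net change is $2 - 1 = 1$. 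Finally, transition $(\text{\ref{tr:4}})$ pops a frame $\const{n_1} \oplus \hole$ of weight $1$ and rewrites the focused value $\const{n_2}$ to $\const{n_1 + n_2}$, both of which contribute $0$ to $\Phi$ in the upward mode, so the potential drops by precisely that frame's weight $1$.

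The only thing to watch is the bookkeeping of which equation of $\Phi$ applies in each mode, since $\Phi$ is overloaded across expressions, stacks, and configurations, and in particular the upward mode $\cconf \argfill \argfill$ assigns weight $0$ to the focused constant whereas the downward mode $\econf \argfill \argfill$ assigns it weight $1$. I do not expect any genuine obstacle: the numerical constants $3$, $2$, and $1$ in the definition of $\Phi$ were evidently chosen precisely so that each of the three overhead transitions and the single contraction transition lowers the potential by one unit. Summing the per-step equation along a run then gives that the total number of machine steps equals $\Phi$ of the initial configuration minus $\Phi$ of the terminal one, which is the intended payoff of introducing the potential.
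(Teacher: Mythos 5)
Your proposal is correct and follows exactly the paper's own proof: a case analysis on the four transition rules, unfolding the definitions of $\Phi$ on each side and checking that the potential drops by exactly $1$ in every case. The per-rule arithmetic you give matches the four (in)equalities displayed in the paper's proof.
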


\begin{proof}Case analysis on transition rules.
\begin{align}
3 + \Phi(e_1) + \Phi(e_2) + \Phi(S) &>
\Phi(e_1) + 2 + \Phi(e_2) + \Phi(S) \tag{\ref{tr:1}}\\
1 + \Phi(S) &>
\Phi(S) \tag{\ref{tr:2}}\\
2 + \Phi(e) + \Phi(S) &>
\Phi(e) + 1 + \Phi(S) \tag{\ref{tr:3}}\\
1 + \Phi(S) &>
\Phi(S) \tag{\ref{tr:4}}
\end{align}
\end{proof}
\begin{theorem}Execution of an additive expression $e$ takes exactly $\Phi(e)$ steps.
\end{theorem}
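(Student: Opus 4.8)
The plan is to run an induction on the potential $\Phi(k)$ of an arbitrary configuration $k$, using the preceding lemma (each transition makes $\Phi$ drop by exactly one) as the engine and the determinism of the machine to make ``the execution'' a well-defined sequence. Concretely, I would prove the slightly stronger statement that from \emph{any} configuration $k$ the machine reaches a terminal configuration in exactly $\Phi(k)$ steps; the theorem then follows by instantiating $k$ with the initial configuration $\econf e \nil$, for which $\Phi(\econf e \nil) = \Phi(e) + \Phi(\nil) = \Phi(e)$.

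First I would pin down which configurations have potential zero. Since $\Phi(\const n) = 1$ and $\Phi(e_1 \oplus e_2) = 3 + \Phi(e_1) + \Phi(e_2) \geq 3$, every expression satisfies $\Phi(e) \geq 1$, so every downward configuration has $\Phi(\econf e S) = \Phi(e) + \Phi(S) \geq 1$. For upward configurations $\Phi(\cconf{\const n} S) = \Phi(S)$, and since each frame contributes at least $1$, an easy induction on $S$ shows $\Phi(S) = 0$ iff $S = \nil$. Hence $\Phi(k) = 0$ holds precisely for the terminal configuration $\cconf{\const n}\nil$.

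Next I would establish progress by a case analysis on the shape of $k$: if $k = \econf{e_1 \oplus e_2}S$ then rule $(\ref{tr:1})$ fires, if $k = \econf{\const n}S$ then $(\ref{tr:2})$ fires, and if $k = \cconf{\const n}S$ with $S = \cons{\hole \oplus e}{S'}$ or $S = \cons{\const{n_1} \oplus \hole}{S'}$ then $(\ref{tr:3})$ or $(\ref{tr:4})$ fires. The sole configuration admitting no transition is $\cconf{\const n}\nil$, so together with the previous paragraph a configuration is stuck if and only if its potential is zero. Determinism is immediate because the left-hand sides of the four rules are pairwise non-overlapping, so the run from any $k$ is uniquely determined.

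With these facts the induction is routine. In the base case $\Phi(k) = 0$ forces $k$ to be terminal, so the execution takes $0 = \Phi(k)$ steps. In the step case $\Phi(k) > 0$ means $k$ is non-terminal, so progress yields a unique transition $k \to k'$, and the preceding lemma gives $\Phi(k') = \Phi(k) - 1$; the induction hypothesis applied to $k'$ supplies exactly $\Phi(k') = \Phi(k) - 1$ further steps, for a total of $\Phi(k)$. The only genuinely substantive ingredient is the exact-decrease lemma already proved; the point I must be most careful about is the characterisation of potential-zero configurations, since it is what ties ``the machine halts'' to ``$\Phi$ has reached $0$'' and thereby upgrades the per-step bound from an inequality to the claimed equality.
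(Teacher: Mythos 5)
Your proposal is correct and follows essentially the same route as the paper, whose proof is the one-line observation that the initial configuration has potential $\Phi(e)$, every transition decreases the potential by exactly $1$, and the machine halts if and only if the potential reaches $0$. You have merely expanded the paper's argument with the supporting details (the characterisation of potential-zero configurations and the progress/determinism case analysis) that the paper leaves implicit.
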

\begin{proof}
The potential of the initial configuration is $\Phi(e)$, every transition decreases it by~1, and the machine stops if and only if the potential drops to 0.
\end{proof}

\subsection{Contributions}

The contributions of this dissertation are spread over the four technical chapters and summarized below. Table~\ref{tab:machines} gives the context of the research area by listing abstract machines for lambda calculus discussed in Section~\ref{sec:lambdamachines}.
The first three columns present respectively their names, articles that introduce each, and year of the publication.
Three of the machines (KNV, RKNV, RKNL) are contributions of the last three chapters of this dissertation.
The fourth column identifies reduction strategies implemented by the machines with strategies surveyed in Chapter I.
The machine called SCAM implements a strategy in a calculus with explicit substitutions, so its strategy is absent in \cite{I}.
However, the survey is planned to be extended with strategies of equivalent termination behaviour.
The last column underlines asymptotic performance of the machines.
We can see that weak strategies (call by name, right-to-left and left-to-right call by value) were naturally implemented efficiently, while in the strong case (normal order, twice right-to-left call by value), it is not immediate to avoid an exponential overhead. 

\begin{table}[h]
\begin{center}
\begin{tabular}{ccccc}
machine    & introduced in     & year & strategy as in \cite{I} & overhead    \\
SECD       & \cite{Landin64}   & 1964 & \textit{rcbv}                  & polynomial  \\
Krivine    & \cite{Krivine85}  & 1985 & \textit{cbn}                    & polynomial  \\
CEK        & \cite{DBLP:conf/ifip2/FelleisenF87}     & 1986 & \textit{lcbv}                  & polynomial  \\
KN         & \cite{DBLP:conf/lfp/Cregut90}  & 1990 & \textit{no                    } & exponential \\
Useful MAM & \cite{DBLP:conf/wollic/Accattoli16} & 2016 & \textit{no                    } & polynomial  \\
KNV        & \cite{II}         & 2020 & \textit{rrcbw                 } & exponential \\
SCAM       & \cite{DBLP:conf/lics/AccattoliCC21}       & 2021 &    & polynomial  \\
RKNV       & \cite{III}        & 2021 & \textit{rrcbw}                  & polynomial  \\
RKNL       & \cite{IV}         & 2022 & \textit{no                    } & polynomial 
\end{tabular}
\end{center}
\caption{Selected abstract machines}
\label{tab:machines}
\end{table}

Chapter I surveys reduction strategies of the lambda calculus.
It is first such a~survey since Sestoft's \cite{DBLP:conf/birthday/Sestoft02}.
Moreover,
it explores the space of strategies,
discovers new ones,
and all the gathered strategies are formalized in the Coq proof assistant.
It pays off a part of the research debt (as explained in the previous subchapter) by setting a landscape of the strategies. In particular, it allows us to precisely pinpoint the strategies implemented by the machines presented in Table~\ref{tab:machines}. The paper also compares a few semantic formats and introduces a new one, called \emph{phased semantics}, that facilitates more algebraic reasoning about the strategies and is applied to prove determinism and shapes of the normal forms of the strategies. All of these properties for all the surveyed strategies are formalized in Coq.

Chapter II is the first chronologically published. The titular contribution, namely the KNV machine, is the first published \emph{abstract machine for strong call by value}. It is expressed with notation very close to \cite{DBLP:journals/lisp/Cregut07} because of its close kinship with Crégut's KN from which it was \emph{systematically interderived} via the functional correspondence \cite{DBLP:conf/ppdp/AgerBDM03}. The reader may also witness an old technique of proving progress via annotated decomposition that is replaced with a potential function in the next chapters.
In turn, proofs of the properties of the \textit{rrcbw} strategy could be now imported from \cite{I}. The paper also explains in Section 3.3 a derivation of a bisimilar machine with \emph{explicit shape invariant}. The \emph{invariant derivation} is demonstrated in the code accompanying the paper. Such a bisimilar machine with extra information added for theoretical purposes is named a \emph{ghost machine}\footnote{\emph{post-defence note}: The name ``ghost machine'' mimics ``ghost variables'' from systems such as Why3.} in \cite{IV}. In \cite{II}, it is employed to read back reduction semantics of \textit{rrcbw} by reversing an automaton corresponding to the grammar of machine stacks. Moreover, the paper demonstrates the \emph{term-streaming technique} that can be used in order to short-circuit a conversion check and return early a negative answer, before the given terms are fully normalized.
A drawback of the KNV machine is that there exist lambda terms that can induce an exponential overhead of the execution.
However, the exponential overhead can be stripped out exactly as it was promised in the conclusion section of the chapter.

Chapter III introduces the RKNV machine that reduces the overhead of KNV from exponential to polynomial.
It is one of the two first, besides SCAM, efficient abstract machines for strong call by value, and the one of the two whose notion of strong call by value is a conservative extension of open call by value \cite{DBLP:conf/aplas/AccattoliG16}.
The obtained machine can be intuitively understood thanks to the functionally corresponding normalizer expressed in OCaml.
The guiding intuition eases proving the correctness of the machine.
The proof of the computational complexity requires an amortized analysis.
The analysis is carried out with the use of a \emph{potential function} à la Okasaki.
Applied proof techniques are successfully reused in Chapter IV.
The paper contains also an example showing that term representations with de Bruijn indices or de Bruijn levels without syntax extensions would introduce an unwanted quadratic overhead.

Chapter IV refines the methods employed in \cite{II} and \cite{III} to obtain the first efficient abstract machine for a full-reducing call-by-need strategy, named RKNL.
The machine also improves the state of the art for normal-order reduction by reducing the overhead from quadratic to quasibilinear.
The main part of the derivation via the functional correspondence was done using an automatic tool called \emph{semantic transformer} \cite{BuszkaB21:LOPSTR21}.
Taking into account the complexity of efficient machines for strong call by value and the complexity of call-by-need semantics, 
the obtained machine turns out to be simpler than expected. Its soundness, completeness and efficiency are proven with respect to the normal-order ($\mathit{no}$ of \cite{I}) strategy, which is a strong strategy. Its correctness is proven also w.r.t. a weak call-by-need machine. The paper contains a table summarizing numbers of steps  \emph{measured empirically} revealing the machine's asymptotic performances.
Two ingredients that can be seen as presentational contributions are:
separate presentation of a ghost machine explicitly representing leftmost-outermost contexts,
and an example of a full run of the RKNL machine in \emph{refocusing notation}.

\section{Conclusion}

The thesis of this dissertation is that language is a marvellous tool with plentiful applications.

We have showcased studies on the formal languages enabled by their definitions in metalanguages, formal and informal.
In the case of formal, functional languages (here OCaml and Racket) that are applied in this dissertation, the systematic transformations of definitional interpreters lead to provably asymptotically efficient implementations of strong reductions in the lambda calculus. The reductions, in turn, are applied in partial evaluation, compilation, and proof assistants (see Section \ref{sec:applications} Practical Applications).

Lambda calculus itself can be seen as a very simple, functional, universal programming language that can be used as a metalanguage.
OCaml and Racket implementations can be desugared to the weak call-by-value lambda calculus being a reasonable computation model for both time and space complexity \cite{DBLP:journals/pacmpl/ForsterKR20}. It indicates an interesting self-reinforcement of programming languages research. For example, the Coq proof assistant is used as a tool in Chapter I, while Chapters III and IV potentially have an application in Coq’s implementation. Thus, languages are used to study themselves.

The introduction also presents a synergy between theoretical research and programming.  Chapter I documents a successful attempt at interactive theorem proving. In Chapters II, III, IV, programming enabled fast prototyping of theoretical constructs such as derived abstract machines in OCaml and Racket. On the other hand, tools like Coq, OCaml, and Racket build on the lambda calculus and originate from the research endeavour, let alone \textit{semantic transformer} that is an artefact of new directions in research.

The whole undertaking behind this dissertation was possible thanks to the existence of language. The fact that its whole content is expressed in languages is so obvious that it could remain unnoticed. However, I find it noteworthy to point out that applications of language alone, including reasoning, programming, writing, and communicating, can be recognized as work beneficial to others.

Based on my modest experience, I suppose that the use of lambda calculus as a metalanguage, as a model language, and as an intermediate language still will be bringing fruits and will be showing the effectiveness of language.

\printbibliography

\end{document}